\tikzset{%
  bigdots/.style={circle, draw=black, fill=black, inner sep=0pt, minimum size=1pt},
  descr/.style={fill=white,inner sep=2.5pt}
}
\newcommand{\unit}{30pt}
\newcounter{cclaim}[section]
\renewcommand{\thecclaim}{\thesection.\arabic{cclaim}}
\newenvironment{observ}{\refstepcounter{cclaim}\par\medbreak\noindent{\bfseries Observation \thecclaim}\itshape}%
{\par\noindent}
\numberwithin{cclaim}{section}
\newcommand{\card}[1]{\left|{#1}\right|}
\newcommand{\set}[1]{\left\{{#1}\right\}}
\renewcommand{\le}{\leqslant}
\renewcommand{\leq}{\leqslant}
\renewcommand{\ge}{\geqslant}
\renewcommand{\geq}{\geqslant}
\newcommand{\abs}[1]{\left\lvert#1\right\rvert}
\newcommand{\lo}[1]{o\mathopen{}(#1)}
\newcommand{\bigo}[1]{O\mathopen{}(#1)}
\newcommand{\thet}[1]{\theta\mathopen{}(#1)}
\newcommand{\intervalle}[4]{\mathopen{#1}#2\mathclose{}\mathpunct{},#3\mathclose{#4}}
\newcommand{\icc}[2]{\intervalle{[}{#1}{#2}{]}}
\newcommand{\ioc}[2]{\intervalle{(}{#1}{#2}{]}}
\newcommand{\ioo}[2]{\intervalle{(}{#1}{#2}{)}}
\newcommand{\sst}[2]{\left\{#1\,:\,#2\right\}}
\newcommand{\LOCAL}{{$\cal LOCAL$}}
\newcommand{\id}{\text{Id}}
\newcommand{\inp}{\mathbf{x}}
\newcommand{\out}{\mathbf{y}}
\newcommand{\hout}{\hat{\out}}
\newcommand{\Reals}[0]{\mathbf{R}}
\newcommand{\Natural}[0]{\mathbf{N}}
\newcommand{\Prob}{\Pi}
\newcommand{\MIS}{\text{\texttt{MIS}}}
\newcommand{\mm}{\text{\texttt{MM}}}
\newcommand{\parameter}{\text{\texttt{p}}}
\newcommand{\para}{\text{\texttt{p}}}
\newcommand{\q}{\text{\texttt{q}}}
\newcommand{\tp}{\tilde{\text{\texttt{p}}}}
\newcommand{\cF}{{\cal F}}
\newcommand{\cA}{{\cal A}}
\newcommand{\cB}{{\cal B}}
\newcommand{\cC}{{\cal C}}
\newcommand{\cP}{{\cal P}}
\newcommand{\cU}{{\cal U}}
\DeclareMathOperator{\dist}{dist}
\DeclareMathOperator{\run}{T}
\DeclareMathOperator{\Ee}{\mathbf{E}}
\DeclareMathOperator{\pr}{\mathbf{Pr}}
\journalname{Distributed Computing}
\begin{document}

\title{Toward More Localized Local Algorithms:\\ Removing
  Assumptions Concerning Global Knowledge\thanks{%
Amos Korman is supported in part by a  France-Israel cooperation grant
 (``Mutli-Computing'' project)
 from the France Ministry of Science and Israel Ministry of Science,
 by the ANR projects ALADDIN and PROSE, and by the INRIA project
 GANG.
Jean-Sébastien Sereni is partially
 supported by the French \emph{Agence Nationale de la Recherche} under reference
 \textsc{anr 10 jcjc 0204 01}.
Laurent Viennot is supported by the european STREP project EULER, and the INRIA project-team  GANG.
}}



 \author{%
 Amos Korman
 \and
 Jean-Sébastien Sereni
 \and
 Laurent Viennot
 }


 \institute{%
 Amos Korman \at
   CNRS and University Paris Diderot\\
   LIAFA Case 7014\\
   Université Paris Diderot -- Paris 7\\
   F-75205 Paris Cedex 13, France.\\
   Tel.: +33-1-57-27-92-56\\
   Fax: +33-1-57-27-94-09\\
   \email{Amos.Korman@liafa.jussieu.fr}
 \and
 Jean-Sébastien Sereni \at
   CNRS (LIAFA, Universit\'e Denis Diderot), Paris, France\\
   and Department of Applied Mathematics (KAM), Faculty of Mathematics and
   Physics, Charles University, Prague, Czech Republic\\
   \email{sereni@kam.mff.cuni.cz}
 \and
 Laurent Viennot \at
   INRIA and University Paris Diderot\\
   LIAFA Case 7014\\
   Université Paris Diderot -- Paris 7\\
   F-75205 Paris Cedex 13, France.\\
   \email{Laurent.Viennot@inria.fr}
 }

\date{Received: date / Accepted: date}

\maketitle

\begin{abstract}
Numerous sophisticated local algorithm were suggested in the
literature for various fundamental problems. Notable examples are
the MIS and $(\Delta+1)$-coloring algorithms by Barenboim and
Elkin~\cite{BaEl10b}, by Kuhn~\cite{Kuh09}, and by Panconesi and
Srinivasan~\cite{PaSr96}, as well as the $\bigo{\Delta^2}$-coloring
algorithm by Linial~\cite{Lin92}. Unfortunately, most known local
algorithms (including, in particular, the aforementioned algorithms)
are \emph{non-uniform}, that is, local algorithms generally use
good estimations of one or more global parameters of the network,
e.g., the maximum degree $\Delta$ or the number of nodes $n$.

This paper provides a method for transforming a
non-uniform local algorithm into a \emph{uniform} one. Furthermore,
the resulting algorithm enjoys the same asymptotic running time as
the original non-uniform algorithm. Our method applies to a wide
family of both deterministic and randomized algorithms.
Specifically, it applies to almost all state of the art
non-uniform algorithms for MIS and Maximal Matching, as well
as to many results concerning the coloring problem.  (In particular,
it applies to all aforementioned algorithms.)

To obtain our transformations we introduce a new distributed tool
called \emph{pruning algorithms}, which we believe may be of
independent interest.

\keywords{distributed algorithm \and global knowledge \and
  parameters \and MIS \and coloring \and maximal matching}
\end{abstract}


\section{Introduction}

\subsection{Background and Motivation}

Distributed computing concerns  environments in which many processors,
located at different sites, must collaborate in order to achieve some global
task.  One of the main themes in distributed network algorithms concerns~the
question of how to cope with \emph{locality} constrains, that is, the lack
of knowledge about the global structure of the network (cf.,~\cite{Pel00}).
On the one hand, information about the global structure may not always be
accessible~to~individual processors and the cost of computing it from
scratch  may overshadow the cost of the algorithm using it. On the other
hand, global knowledge is not always essential, and many seemingly global
tasks can be efficiently achieved by letting processors know more about
their immediate neighborhoods and less about the rest of the network.

A standard model for capturing the essence of locality is the $\cal{LOCAL}$
model (cf.,~\cite{Pel00}).  In this model, the network is modeled by a graph
$G$, where the nodes of $G$ represent the processors and the edges
represent the communication links.  To perform a task, nodes are woken up
simultaneously, and computation proceeds in fault-free synchronous rounds
during which every node exchanges messages with its neighbors, and performs
arbitrary computations on its data.  Since many tasks cannot be solved
distributively in an anonymous network in a deterministic way,
symmetry breaking must be addressed. Arguably, there are two typical ways
to address this issue: the first one is to use randomized algorithms,
while the second one is to assume that each node $v$ in the network
is initially provided a unique identity $\id(v)$.
A \emph{local algorithm} operating in such a setting must return an output
at each node such that the collection of outputs satisfies the required
task.  For example, a \emph{Maximal Independent Set} (MIS) of a graph $G$
is a set $S$ of nodes of $G$ such that every node not in $S$ has a neighbor in $S$
and no two nodes of $S$ are adjacent.
In a local algorithm for the MIS problem, the output
at each node $v$ is a bit $b(v)$ indicating whether $v$ belongs to a
selected set $S$ of nodes, and it is required that $S$ forms a
MIS of $G$.  The \emph{running time} of a local algorithm  is the number
of rounds needed for the algorithm to complete its operation at each node,
taken in the worst case scenario.  This is typically evaluated with respect
to some parameters of the underlying graph.  The common parameters used are
the number  of nodes $n$ in the graph and the maximum degree $\Delta$ of a
node in the graph.

To ease the computation, it is often assumed that some kind of knowledge
about the global network is provided to each node \emph{a priori}. A typical
example of such knowledge is the number of nodes $n$ in the network.  It
turns out that in some cases, this (common) assumption can give a lot of
power to the distributed algorithm.  This was observed by Fraigniaud
\emph{et al.}~\cite{FKP11} in the context of local decision: they introduced
the complexity class of decision problems NLD, which contains all decision
problems that can be verified in constant time with the aid of a
certificate. They proved that, although there exist decision problems that
do not belong to NLD, every (computable) decision problem falls in NLD if it
is assumed that each node is given the value of $n$ as an input.

In general, the amount and type of such information may have a profound
effect on the design of the distributed algorithm.  Obviously, if the whole
graph is contained in the input of each node, then the
distributed algorithm can be reduced to a central one.  In fact, the whole
area of \emph{computation with advice} \cite{CFI+08,DePe10,FGIP09,FIP10,FKL07,KoKu07,KKP10}
is dedicated to studying the amount of information contained in the inputs
of the nodes and its effect on the performances of the distributed algorithm. For instance,
Fraigniaud \emph{et al.}~\cite{FKL07} showed that if each node is provided
with only a constant number of bits then one can locally construct a
BFS-tree in constant time, and  can locally construct a MST in $\bigo{\log
n}$ time, while both tasks require diameter time if no knowledge is assumed.
As another example, Cohen \emph{et al.}~\cite{CFI+08} proved that $\bigo{1}$
bits, judiciously chosen at each node, can allow a finite automaton to
distributively explore every graph.  As a matter of fact, from a radical
point of view, for many questions (e.g., MIS and Maximal Matching),
additional information may push the question at hand into absurdity: even a
constant number of bits of additional information per node is enough to
compute a solution---simply let the additional information encode the
solution!

When dealing with locality issues, it is desired that the amount of
information regarding the whole network contained in the inputs of the nodes is
minimized.  A local algorithm that assumes that each node is initially given
merely its own identity is often called \emph{uniform}. Unfortunately, there
are only few local algorithms in the literature that are uniform (e.g.,
\cite{BGPV08,LOW08,LPR09,Lub86,ScWa10b}).  In contrast, most known local
algorithms  assume that the inputs of all nodes contain upper bounds on the values of some
global parameters of the network.  Moreover, it is often assumed that all
inputs contain the same upper bounds on the global parameters.  Furthermore,
typically, not only the correct operation of the algorithm requires that
upper bounds be contained in the inputs of all nodes,
but also the running time of the algorithm
is actually a function of the upper bound estimations and not of the actual values of the
parameters. Hence, it  is desired that the upper bounds contained in the
inputs are not significantly larger than the real values of the parameters.

Some attempts to transform a non-uniform local algorithm into a uniform one
were made by examining the details of the algorithm at hand and modifying it
appropriately. For example, Barenboim and Elkin~\cite{BaEl10b} first gave a
non-uniform MIS algorithm for the family of graphs with arboricity
$a=\bigo{{\log^{1/2-\delta} n}}$, for any constant $\delta\in (0,1/2)$,
running in time $\bigo{\log n/\log\log n}$. (The \emph{arboricity} of a graph
being the smallest number of acyclic subgraphs that together contain all the edges
of the graph.) At the cost of increasing the
running time to $\bigo{{\frac{\log n}{\log\log n}\log^* n}}$, the authors
show how to modify their algorithm so that the value of $a$ need not be part
of the inputs of nodes.  In addition to the MIS algorithms, the work of \cite{BaEl10b} also contains
algorithms that do not require the knowledge of the arboricity, but have
the same asymptotic running time as the ones that require it. For example, this corresponds to algorithms
computing forests-decomposition and $O(a)$-coloring.
Nevertheless, all their algorithms still require the inputs of all nodes to contain
a common upper bound on $n$.

We present general methods for transforming a
non-uniform local algorithm into a uniform one without increasing the
asymptotic running time of the original algorithm. Our method applies to
a wide family of both deterministic and randomized algorithms. In
particular, our method applies to all state of the art non-uniform
algorithms for MIS and Maximal Matching, as well as to several of the
best known results for $(\Delta+1)$-coloring.

Our transformations are  obtained using a new type of local algorithms
termed \emph{pruning algorithms}. Informally, the basic property of a
pruning algorithm is that it allows one to iteratively apply a sequence of
local algorithms (whose output may not form a correct global solution) one
after the other, in a way that ``always progresses'' toward a  solution.  In a
sense, a pruning algorithm   is a combination of a gluing mechanism and a
\emph{local checking} algorithm (cf., \cite{FKP11,NaSt95}). A local checking
algorithm for a problem $\Prob$ runs on graphs with an output value at each
node (and possibly an input too), and can locally detect whether the output
is ``legal'' with respect to $\Prob$.  That is, if  the instance is not
legal then at least one node detects this, and raises an alarm.  (For
example, a local checking algorithm for MIS is trivial: each node in the set
$S$, which is suspected to be a MIS, checks that none of its neighbors
belongs to  $S$, and each node not in $S$ checks that at least one of its
neighbors belongs to  $S$. If the check fails, then the node raises an
alarm.) A pruning algorithm needs to satisfy an additional \emph{gluing}
property not required by local checking algorithms.  Specifically, if the
instance is not legal, then the pruning algorithm must carefully choose the
nodes raising the alarm (and possibly modify their input too), so that a
solution for the subgraph induced by those alarming nodes can be well glued
to the previous output of the non-alarming nodes, in a way such  that the
combined output is a solution to the problem for the whole initial graph.

We believe that  this new  type of algorithms may be of independent
interest.  Indeed, as we show, pruning algorithms have several types of
other applications in the theory of local computation, besides the
aforementioned issue of designing uniform algorithms. Specifically, they can
be used also to transform a local Monte-Carlo algorithm into a Las Vegas
one, as well as to obtain an algorithm that runs in the minimum running time
of a given (finite) set of uniform algorithms.

\subsection{Previous Work}

\begin{table*}[!t]
\begin{center}
\scalebox{0.9}{
\bgroup\small
\begin{tabular}{@{}lccccr@{}}\toprule
Problem & Parameters & Time & Ref. & This paper (uniform) & Corollary~\ref{cor:main}\\
\midrule
Det. MIS and $(\Delta\negthickspace+\negthickspace1)$-coloring
    & $n,\Delta$ & $\bigo{\Delta + \log^*n}$ & \cite{BaEl09,Kuh09}
    & \multirow{2}{5cm}{$\min\set{\bigo{\Delta + \log^*n},
        2^{\bigo{\sqrt{\log n}}}}$} & \ref{thm:mis} \\
  & $n$ & $2^{\bigo{\sqrt{\log n}}}$ & \cite{PaSr96} & & \ref{thm:color-delta}\\

Det. MIS (arboricity $a=\lo{\sqrt{\log n}}$)
   & $n,a$ & $\lo{\log n}$ & \cite{BaEl10b}
   & $\lo{\log n}$ & \ref{thm:mis} \\

Det. MIS (arboricity $a=\bigo{{\log^{1/2-\delta}n}}$)
   & $n,a$ & $\bigo{\log n/\log\log n}$ & \cite{BaEl10b}
   & $\bigo{\log n/\log\log n}$ & \ref{thm:mis} \\

Det. $\lambda(\Delta+1)$-coloring
  & $n,\Delta$ & $\bigo{\Delta/\lambda+\log^* n}$ & \cite{BaEl09,Kuh09}
  & $\bigo{\Delta/\lambda+\log^* n}$ & \ref{thm:uniformtradeoff}\\

Det. $\bigo{\Delta}$-edge-coloring
  & $n,\Delta$ & $\bigo{\Delta^\epsilon+\log^* n}$ & \cite{BaEl11}
  & $\bigo{\Delta^\epsilon+\log^* n}$ & \ref{thm:edge-coloring}\\

Det. $\bigo{\Delta^{1+\epsilon}}$-edge-coloring
  & $n,\Delta$ & $\bigo{\log \Delta+\log^* n}$ & \cite{BaEl11}
  & $\bigo{\log \Delta+\log^* n}$ & \ref{thm:edge-coloring}\\

Det. Maximal Matching
  & $n$ or $\Delta$ & $\bigo{\log^4 n}$ & \cite{HKP01}
  & $\bigo{\log^4 n}$ & \ref{thm:mm}\\

Rand. $(2,2(c+1))$-ruling set
  & $n$ & $\bigo{2^c\log^{1/c}n}$ & \cite{ScWa10a}
  & $\bigo{2^c\log^{1/c}n}$ & \ref{thm:rs}\\

Rand. MIS
& uniform & $\bigo{\log n}$ & \cite{ABI86,Lub86}
  & & \\
\bottomrule
\end{tabular}
\egroup
}
\end{center}
\label{tab:results}
\caption{Comparison of \LOCAL{} algorithms with respect to the use
of global parameters. ``Det.'' stands for deterministic, and
  ``Rand.'' for randomized.}
\end{table*}

\paragraph{MIS and coloring:}
There is a long line of research concerning the two related problems of $(\Delta+1)$-coloring and
MIS~\cite{AGLP89,CoVi86,GoPl87,GPS88,KMW04,KuWa06,Lin92}.
A \emph{$k$-coloring} of a graph is an assignment of an integer in
$\{1,\ldots,k\}$ to each node such that
no two adjacent vertices are assigned the same integer.
Recently, Barenboim and Elkin~\cite{BaEl09}
and independently Kuhn~\cite{Kuh09} presented two elegant
$(\Delta+1)$-coloring and MIS algorithms running in $\bigo{\Delta+\log^* n}$ time on general graphs. This
is the best currently-known bound for these problems on low degree graphs. For graphs with a
large maximum degree $\Delta$, the best
bound is due to Panconesi and
Srinivasan~\cite{PaSr96}, who devised an algorithm running in $2^{\bigo{\sqrt{\log n}}}$ time.
The aforementioned algorithms are not uniform.
Specifically, all three algorithms
require that the inputs of all nodes contain a common upper
bound on $n$ and the first two also require a common upper bound on $\Delta$.

For bounded-independence graphs, Schneider and Wattenhofer~\cite{ScWa10b} designed uniform
deterministic MIS and $(\Delta+1)$-coloring algorithms  running in
$\bigo{\log^* n}$ time.  Barenboim and Elkin~\cite{BaEl10b} devised a
deterministic algorithm for the MIS problem on graphs of bounded arboricity
that requires time $\bigo{{\log n/\log\log n}}$.  More specifically, for graphs
with arboricity $a=\lo{\sqrt{\log n}}$,  they show that a MIS can be computed
deterministically in $\lo{\log n}$ time, and whenever $a=\bigo{{\log^{1/2-\delta}
n}}$ for some constant $\delta\in\ioo{0}{1/2}$, the same algorithm runs in time
$\bigo{\log n/\log\log n}$.  At the cost of increasing the running time by a
multiplicative factor of $\bigo{\log^* n}$, the authors show how to
modify their algorithm so that the value of $a$ need not be part of the inputs
of nodes.  Nevertheless,
all their algorithms require the inputs of all nodes to contain a common upper bound
on the value of $n$.
Another MIS algorithm which is efficient for graphs with low arboricity was devised by Barenboim and Elkin~\cite{BaEl10a}; this algorithm runs in time
$\bigo{a+a^{\epsilon}\log n}$ for arbitrary constant $\epsilon>0$.

Concerning the problem of coloring with more than $\Delta+1$ colors,
Linial~\cite{Lin87,Lin92}, and subsequently Szegedy and Vishwanathan~\cite{SzVi93},
described $\bigo{\Delta^2}$-coloring algorithms with running time $\thet{\log^* n}$.
Barenboim and  Elkin~\cite{BaEl09} and, independently, Kuhn~\cite{Kuh09}
generalized this by presenting
a tradeoff between the running time and the number of colors: they devised
a $\lambda(\Delta+1)$-coloring algorithm with running
time $\bigo{\Delta/\lambda+\log^* n}$, for any $\lambda\geq 1$.
All these algorithms
require the inputs of all nodes to contain common upper bounds on both $n$ and $\Delta$.

Barenboim and Elkin~\cite{BaEl10a} devised a $\Delta^{1+\lo{1}}$ coloring algorithm
running in time  $\bigo{f(\Delta) \log \Delta  \log n}$, for
an arbitrarily slow-growing function $f = \omega(1)$. They
also produced an $\bigo{\Delta ^{1+\epsilon}}$-coloring algorithm running
in $\bigo{\log \Delta  \log n}$-time, for
an arbitrarily small constant $\epsilon > 0$, and an $\bigo{\Delta}$-coloring algorithm running in
$\bigo{\Delta^{\epsilon} \log n}$ time, for an arbitrarily small constant $\epsilon> 0$.
All these coloring algorithms require the inputs of all nodes to contain the
values of both $\Delta$ and $n$.
Other deterministic non-uniform coloring algorithms with number of colors and
running time~corresponding to the arboricity parameter were given by
Barenboim and Elkin~\cite{BaEl10a,BaEl10b}.

Efficient deterministic algorithms for the edge-coloring
problem can
be found in several papers~\cite{BaEl10a,BaEl11,PaRi01}.  In particular, Panconesi and Rizzi~\cite{PaRi01} designed a simple
deterministic local algorithm that finds a $(2\Delta-1)$-edge-coloring of a
graph in time $\bigo{\Delta+\log^* n}$.
Recently, Barenboim and Elkin~\cite{BaEl11},
designed an $\bigo{\Delta}$-edge-coloring algorithm running in
time $\bigo{\Delta^\epsilon}+\log^* n$, for any $\epsilon>0$, and an
$\bigo{\Delta^{1+\epsilon}}$-edge-coloring algorithm running in time
$\bigo{\log \Delta}+\log^* n$, for any $\epsilon>0$.
All these
algorithms require the inputs of all nodes to contain common upper bounds on both $n$ and
$\Delta$.

Randomized algorithms for  MIS and $(\Delta+1)$-coloring
running in expected time $\bigo{\log n}$ were initially given by
Luby~\cite{Lub86} and, independently, by Alon \emph{et al.}~\cite{ABI86}.

Recently, Schneider and Wattenhofer~\cite{ScWa10a} constructed the best
currently-known
non-uniform $(\Delta+1)$-coloring algorithm, which runs in
time
$\bigo{\log \Delta +\sqrt{\log n}}$.
They also provided random algorithms for coloring using more colors.
For every positive integer $c$,
a randomized algorithm for $(2,2(c+1))$-ruling set
running in time $\bigo{2^c\log^{1/c}n}$ is also presented.
(A set $S$ of nodes in a graph being \emph{$(\alpha,\beta)$-ruling}
if every node not in $S$ is at distance at most $\beta$ of a node in $S$
and no two nodes in $S$ are at distance less than $\alpha$.)
All these algorithms of Schneider and Wattenhoffer~\cite{ScWa10a} are not uniform and require
the inputs of all nodes to contain a common upper bound on $n$.

\paragraph{Maximal Matching:} A \emph{maximal matching} of a graph $G$ is a
set $M$ of edges of $G$ such that every edge not in $M$ is incident to an edge
in $M$ and no two edges in $M$ are incident.
Schneider and Wattenhofer~\cite{ScWa10b} designed a uniform deterministic maximal matching
algorithm on bounded-independence graphs running in $\bigo{\log^* n}$ time.
For general graphs, however, the state of the art maximal matching algorithm
is not uniform: Hanckowiak \emph{et al.}~\cite{HKP01} presented a non-uniform deterministic
algorithm for maximal matching running in time $\bigo{\log^4 n}$.
This algorithm assumes that the inputs of all nodes contain a common upper
bound on $n$
(this assumption can be omitted for some parts of the algorithm under
the condition that the inputs of all nodes contain the value of $\Delta$).

\subsection{Our Results}
The main conceptual contribution of the paper is the introduction of a new
type of algorithms called \emph{pruning algorithms}. Informally, the fundamental property
of this type of algorithms is to allow one to iteratively run a sequence of algorithms
(whose output may not necessarily be correct everywhere)  so that the global output does not deteriorate, and  it always progresses
toward a solution.

Our main application for pruning algorithm  concerns the problem of locally computing a
global solution while minimizing the necessary global information contained
in the inputs of the nodes.  Addressing this, we  provide a method for
transforming a non-uniform local algorithm into a uniform one without
increasing the asymptotic running time of the original algorithm. Our method
applies to a wide family of both deterministic and randomized algorithms; in
particular, it applies to many of the best known results
concerning classical problems such as MIS, Coloring, and Maximal Matching.
(See Table \ref{tab:results} for a summary of some of the uniform algorithms
we obtain and the corresponding state of the art existing non-uniform
algorithms.)

In another application, we show how to transform a Monte-Carlo local
algorithm into a Las Vegas one.
Finally, given a constant number of uniform algorithms for
the same problem whose running times depend on different parameters---which
are unknown to nodes---we show a method for constructing a uniform
algorithm solving the problem, that on every instance runs asymptotically as
fast as the fastest algorithm among those given algorithms.

Stating our main results requires a number of formal definitions, so we defer
the precise statements to later parts of the paper. Rather, we provide here
some interesting corollaries of our results. References for the
corresponding non-uniform algorithms are provided in Table~\ref{tab:results}.
(The notion of
``moderately-slow function'' used in item~\ref{thm:uniformtradeoff} below is
defined in Section~\ref{sec:preliminaries}.)

\begin{corollary}\label{cor:main}
\mbox{}
\begin{enumerate}[label=\textup{(\roman{*})}, ref={(\roman{*})}]
\item\label{thm:mis}
There exists a uniform deterministic algorithm solving $\MIS$ on general
graphs in time
\[
\min\left\{g(n),h(\Delta,n), f(a,n)\right\},
\]
where  $g(n)=2^{\bigo{\sqrt{\log n}}}$, $h(\Delta,n)=\bigo{\Delta+\log^* n}$, and $f(a,n)$ is bounded as follows. $f(a,n)=\lo{\log n}$ for graphs of arboricity $a=\lo{\sqrt{\log n}}$,
$f(a,n)=\bigo{\log n/\log\log n}$ for arboricity
$a=\bigo{\log^{1/2-\delta} n}$, for some constant $\delta\in\ioo{0}{1/2}$; and
otherwise: $f(a,n)=\bigo{a+a^{\epsilon}\log n}$, for arbitrary small constant $\epsilon>0$.\\

\item\label{thm:color-delta}
There exists a uniform deterministic algorithm solving the $(\Delta+ 1)$-coloring  problem  on general graphs in time $\min\{\bigo{\Delta+\log^* n},2^{\bigo{\sqrt{\log n}}}\}$.\\ 

\item\label{thm:uniformtradeoff}
There exists a uniform deterministic algorithm solving the
$\lambda(\Delta+1)$-coloring problem on general graphs and running in time
$\bigo{\Delta/\lambda+\log^* n}$, for any $\lambda\geq 1$, such that $\Delta/\lambda$ is a
moderately-slow function.
In particular, there exists a uniform deterministic algorithm solving the
$\bigo{\Delta^2}$-coloring problem in time $\bigo{\log^* n}$.\\

\item\label{thm:uniform-sublinear}
The following uniform deterministic coloring algorithms exist.\\
\begin{itemize}
\item
A uniform $\Delta^{1+o(1)}$-coloring algorithm
running in time  $\bigo{f(\Delta) \log \Delta  \log n\log\log n}$, for
an arbitrarily slow-growing function $f = \omega(1)$.\\
\item
A uniform
 $\bigo{\Delta ^{1+\epsilon}}$-coloring algorithm running  in $\bigo{\log
 \Delta  \log n\log\log n}$ time, for any constant $\epsilon>0$.\\
\item
A uniform $\bigo{\Delta}$-coloring algorithm running in\break
$\bigo{\Delta^{\epsilon} \log n \log\log n}$ time, for any constant $\epsilon>0$.\\
\end{itemize}

\item\label{thm:edge-coloring}
\begin{itemize}
\item There exists a uniform deterministic
$\bigo{\Delta}$-edge-coloring algorithm  for general graphs running in time
$\bigo{\Delta^\epsilon+\log^* n}$, for any constant $\epsilon>0$.\\
\item There exists a
uniform deterministic $\bigo{\Delta^{1+\epsilon}}$-edge-coloring algorithm
for general graphs that runs in time
$\bigo{\log \Delta+\log^* n}$, for any constant $\epsilon>0$.\\
\end{itemize}

\item\label{thm:mm}
There exists a uniform deterministic algori\-thm  solving the
maximal matching problem in time $\bigo{\log^4\! n}$.\\

\item\label{thm:rs}
For a constant integer $c\ge 1$, there exists a uniform randomized algorithm
solving the $(2,2(c+1))$-ruling set problem in time $\bigo{2^c\log^{1/c}n}$.
\end{enumerate}
\end{corollary}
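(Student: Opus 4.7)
The plan is to derive each item of the corollary as a direct application of the general pruning-based transformation (the main theorem established later in the paper) to a specific non-uniform algorithm drawn from Table~\ref{tab:results}. For this to work, I first need to exhibit, for every problem $\Pi$ appearing in the corollary, a pruning algorithm for $\Pi$. For \MIS{} this is essentially the classical local checker combined with the natural gluing property: if $S$ is an independent set in $G$ and $U$ is the set of nodes whose \MIS-output is incorrect, then removing from $U$ the vertices already dominated by $S$ yields a subgraph whose own MIS glues with $S$ into a MIS of $G$. Analogous arguments should handle proper coloring (any partial proper coloring extends to the alarming subgraph, provided each alarming node retains a list of admissible colors), maximal matching (saturated but unmatched edges need not be revisited), and $(\alpha,\beta)$-ruling sets.

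Once pruning algorithms are in place, items \ref{thm:mis}--\ref{thm:rs} should follow by feeding into the transformation the concrete non-uniform algorithms cited in the last column of Table~\ref{tab:results}. For items \ref{thm:mis} and \ref{thm:color-delta}, where the guarantee is the minimum over several bounds depending on different unknown parameters ($n$, $\Delta$, $a$), I would additionally invoke the ``fastest of a finite collection of uniform algorithms'' result mentioned in the Introduction: once each of the three (respectively two) non-uniform algorithms has been converted into a uniform one, pruning also allows us to simulate a race between them and keep whichever output passes the gluing test first, incurring only a constant-factor overhead.

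Item \ref{thm:uniform-sublinear} is the one outlier whose running time carries an extra $\log\log n$ factor relative to the non-uniform originals. I expect this factor to arise from combining the transformation with a doubling search on the unknown parameter $\Delta$: each candidate value $2^{i}$ is tried in turn, the pruning step certifies whether the resulting coloring is globally correct, and the geometric sum of the per-guess running times telescopes into the stated bound. The randomized case \ref{thm:rs} is handled through the randomized variant of our transformation; here the pruning step, applied after each batch of random trials, converts the Monte-Carlo success probability of the underlying algorithm into a Las-Vegas-style guarantee by re-running on the alarming subgraph whenever it is nonempty.

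The main obstacle will be the design of the pruning algorithms themselves, and in particular verifying the gluing property with enough care. One must ensure that the set of ``alarming'' nodes produced by the algorithm is not only locally certifiable but also induces a subgraph on which a fresh invocation of the non-uniform algorithm produces outputs that combine cleanly with the surviving partial solution, without violating any constraint of $\Pi$. For the coloring variants with restricted palettes, this entails tracking a list of still-admissible colors at each surviving node and proving that these lists remain rich enough for the recursive call to succeed within the claimed running time.
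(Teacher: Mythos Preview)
Your outline is on target for items~\ref{thm:mis}, \ref{thm:mm}, and~\ref{thm:rs}: the paper, too, designs constant-time pruning algorithms for $\MIS$, $\mm$, and $(2,\beta)$-ruling sets, then feeds the cited non-uniform algorithms into the general transformer (Theorems~\ref{simple}, \ref{thm:randomized}, \ref{thm:parameters}), and finally uses Theorem~\ref{thm:min} to race the resulting uniform $\MIS$ algorithms against one another. So far your plan matches the paper.

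The genuine gap is in the coloring items~\ref{thm:color-delta}--\ref{thm:edge-coloring}. You assume that ``analogous arguments'' yield a pruning algorithm for $g(\Delta)$-coloring, but this is exactly where the paper hits a wall and says so explicitly: no direct pruning algorithm for coloring is obtained. Two obstructions are identified. First, solution detection fails: a node cannot check locally whether its color lies in $\icc{1}{g(\Delta)}$ because $\Delta$ is unknown. Second, gluing is not automatic: once a pruned node fixes a color, its unpruned neighbors lose that color, and you must argue that the residual instance is still solvable \emph{with the same color budget} by the very same black-box algorithm $\cA^\Gamma$. Your sentence ``provided each alarming node retains a list of admissible colors'' sweeps both issues under the rug; in particular, the black box $\cA^\Gamma$ is a $g(\tilde\Delta)$-coloring algorithm, not a list-coloring algorithm, so you cannot simply hand it shrunken lists.

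The paper's workarounds are substantially different from what you sketch. For~\ref{thm:color-delta} it avoids coloring pruning entirely by the classical reduction of $(\Delta+1)$-coloring to $\MIS$ on (a degree-based variant of) $G\times K_{\Delta+1}$, then quotes item~\ref{thm:mis}. For~\ref{thm:uniformtradeoff}--\ref{thm:edge-coloring} it proves a separate theorem (Theorem~\ref{thm:tight}): it introduces an auxiliary \emph{strong list-coloring} problem whose instances carry a common upper bound $\hat\Delta$ in every node's input, designs a pruning algorithm for \emph{that} problem, partitions $G$ into degree layers so that each layer has a local $\hat\Delta$, and runs a two-phase scheme (first color layers in parallel via the transformer, then recolor to compress the palette). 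The extra $\log\log n$ in~\ref{thm:uniform-sublinear} does not come from a doubling search on $\Delta$ as you guessed; it is the sequence-number overhead $s_f(f^*)=\bigo{\log f^*}$ that the transformer incurs when the running-time bound is multiplicative, $f(\Delta,n)=f_1(\Delta)\cdot f_2(n)$, rather than additive (cf.\ Observation~\ref{obs:sequence}).
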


\section{Preliminaries}\label{sec:preliminaries}
\paragraph{General definitions:}
For two integers $a$ and $b$, we let
$\icc{a}{b}=\{a,a+1,\ldots,b\}$.
A vector $\underline{x}\in\Reals^\ell$ is said
to \emph{dominate} a vector $\underline{y}\in\Reals^\ell$ if $\underline{x}$
is coordinate-wise greater than or equal to $\underline{y}$, that is,
$\underline{x}_k\ge\underline{y}_k$ for each $k\in\icc{1}{\ell}$.

 For a graph $G$, we let $V(G)$ and $E(G)$ be the sets of nodes and
edges of $G$, respectively. (Unless mentioned otherwise, we consider only undirected and unweighted graphs.) The \emph{degree}
$\deg_G(v)$ of a node $v\in V(G)$ is the number of neighbors of $v$ in $G$.
The \emph{maximum degree of $G$} is $\Delta_G=\max\sst{\deg_G(v)}{v\in V(G)}.$

Let $u$ and $v$ be two nodes of $G$.
The \emph{distance} $\dist_G(u,v)$ between $u$ and $v$ is the number of
edges on a shortest path connecting them. 
Given an integer $r\ge 0$, the \emph{ball} of radius $r$
around $u$ is the subgraph $B_G(u,r)$ of $G$ induced
by the collection of  nodes at distance at most $r$ from $u$. The
\emph{neighborhood} $N_G(u)$ of $u$ is the set of neighbors of $u$, i.e.,
$N_G(u)= B_G(u,1)\setminus\{u\}$.
In what follows, we may omit the subscript $G$ from the previous notations when there is no risk
of confusion.\\

\paragraph{Functions:}
A function $f\colon \Reals ^\ell\to \Reals$ is \emph{non-decreasing}
if for every two vectors $\underline{x}$ and $\underline{y}$ such that
$\underline{x}$ dominates $\underline{y}$,
\[
f(\underline{y})\leq f(\underline{x}).
\]

A function $f\colon\Reals^+\to \Reals^+$ is \emph{moderately-slow} if it
is non-decreasing  and there exists a positive integer $\alpha$ such that
\[
\forall i\in\Natural\setminus\{1\},\quad\alpha\cdot f(i)\ge f(2i).
\]
In other words, $f(c\cdot i)=\bigo{f(i)}$ for every constant $c$ and every integer $i$, where the
constant hidden in the $O$ notation depends only on $c$.
An example of a moderately-slow function is given by the logarithm.

A function $f\colon\Reals^+\to \Reals^+$ is \emph{moderately-increasing} if
it is non-decreasing and there exists a positive integer $\alpha$  such that
\[
\forall i\in\Natural\setminus\{1\},\quad f(\alpha\cdot
i)\ge2f(i)\quad\text{and}\quad \alpha\cdot f(i)\ge f(2i).
\]
Note that $f(x)=x^{k_1} \log^{k_2}(x)$ is a moderately-increasing function
for every two reals $k_1\ge 1$ and $k_2\ge 0$.
Moreover, every
moderately-increasing function is moderately-slow. On the other hand, some
functions (such as the constant functions or the logarithm) are
moderately-slow but not moderately-increasing.

A function $f\colon\Reals^+\to\Reals^+$ is \emph{moderately-fast} if it is
moderately-increasing and there exists a polynomial $P$ such that
\[
\forall x\in\Reals^+,\quad x<f(x)<P(x).
\]

A function $f\colon\Reals^+\to\Reals^+$ \emph{tends to infinity} if
\[\lim\sup_{x\to\infty}f(x)\break=\infty,\] and
$f$ is \emph{ascending} if it is non-decreasing and it tends to infinity.
(Note that in this case $\lim_{x\to\infty}f(x)=\infty$.)

A function $f\colon(\Reals^+)^{\ell}\to\Reals^+$ is \emph{additive} if
there exist $\ell$ ascending functions $f_1,\ldots,f_{\ell}$ such that
\[
f(x_1,\ldots,x_\ell)=\sum_{i=1}^\ell f_i(x_i).
\]

\paragraph{Problems and instances:}
Given a set $V$ of nodes, a \emph{vector for $V$} is an assignment $\inp$ of
a bit string $\inp(v)$ to each $v\in V$, i.e., $\inp$ is a function $\inp:V\to
\{0,1\}^*$. A \emph{problem} is defined by a collection of triplets:
$\Prob=\{(G,\inp,\out)\}$, where $G$ is a (not necessarily connected)
graph, and $\inp$ and $\out$ are \emph{input} and \emph{output} vectors for
$V$, respectively. We consider only problems that are closed under disjoint
union, i.e., if $G_1$ and $G_2$ are two vertex disjoint graphs and
$(G_1,\inp_1,\out_1), (G_2,\inp_2,\out_2)\in\Prob$ then
$(G,\inp,\out)\in\Prob$, where $G= G_1\cup G_2$, $\inp= \inp_1\cup\inp_2$ and
$\out= \out_1\cup\out_2$.

 An \emph{instance}, with respect to a given
problem $\Prob$, is a pair $(G,\inp)$ for which there exists an output
vector $\out$ satisfying $(G,\inp,\out)\in \Prob$.
In what follows, whenever we consider a collection $\cF$ of instances, we always assume that
$\cF$ is closed under inclusion. That is, if $(G,\inp)\in \cF$ and $(G',\inp')\subseteq (G,\inp)$
(i.e., $G'$ is a subgraph of $G$ and $\inp'$ is the input vector $\inp$ restricted to $V(G')$) then
$(G',\inp')\in \cF$.
Informally, given a problem $\Prob$ and a collection of instances $\cF$, the goal is to design an efficient
distributed algorithm that takes an instance  $(G,\inp)\in\cF $ as input, and
produces an output vector $\out$ satisfying $(G,\inp,\out)\in \Prob$. The
reason to require problems to be closed under disjoint union is that a
distributed algorithm operating on an instance $(G,\inp)$ runs separately and
independently on each connected component of $G$.
Let $\cal{G}$ be a family of graphs closed under inclusion.
We define $\cF(\cal{G})$ to be $\{G\}\times\{0,1\}^*$.

We assume that each node
$v\in V$ is  provided with  a unique integer referred to as the
\emph{identity} of $v$, and denoted $\id(v)$; 
by unique identities, we mean
that $\id(u)\neq \id(v)$ for every two distinct nodes $u$ and $v$.
For ease of exposition, we consider the identity of a node to be
part of its input.

We consider classical problems such as  coloring, maximal matching ($\mm$), Maximal Independent Set ($\MIS$) and the $(\alpha,\beta)$-ruling
set problem.
Informally,  viewing the output of a node as a \emph{color},
the requirement of \emph{coloring} is that the colors of two neighboring nodes must be different.
In the $(\alpha,\beta)$-ruling set problem, the output at each node is Boolean,
and indicates whether
the node belongs to a set $S$ that must form an
$(\alpha,\beta)$-ruling set. That is,  the set $S$ of selected nodes must satisfy:  (1) two nodes that belong to $S$ must be at distance at least $\alpha$ from each other, and (2) if a node does not belong to $S$, then there is a node in the set at distance at most $\beta$ from it. $\MIS$ is a special case of the ruling set problem, specifically, $\MIS$ is  precisely $(2,1)$-ruling set.
Finally, given a triplet
$(G,\inp,\out)$, two nodes $u$ and $v$ are said to be \emph{matched} if $(u,v)\in E$,
$\out(u)=\out(v)$ and  $\out(w)\neq\out(u)$ for every $w\in(N_G(u)\cup
N_G(v))\setminus\{u,v\}$.
Thus, the $\mm$ problem  requires that each node $u$ is either matched to one of its neighbors or that every neighbor $v$  of $u$ is matched to one of $v$'s neighbors.\\

\paragraph{Parameters:} Fix a problem $\Prob$ and let $\cF$ be a collection
of instances for $\Prob$.
A \emph{parameter} $\parameter$ is a positive valued function
$\para:\cF\to \Natural$. The parameter $\parameter$ is
\emph{non-decreasing}, if $\para(G',\inp')\leq \para(G,\inp)$ whenever $(G',\inp')\in \cF$
and $(G',\inp')\subseteq (G,\inp)$.

Let $\cF$ be a collection of instances.
A parameter $\para$ for $\cF$ is a \emph{graph-parameter} if $\para$ is
independent of the input, that is, if $\para(G,\inp)=\para(G,\inp')$ for every two instances $(G,\inp),(G,\inp')\in\cF$
such that the input  assignments $\inp$ and $\inp'$ preserve the identities, i.e., the inputs $\inp(v)$ and $\inp'(v)$ contain the same identity $\id(v)$ for every $v\in V(G)$.
In what follows, we will consider only non-decreasing graph-parameters
(note, not all graph-parameters are non-decreasing, an example being the
diameter of a graph). More precisely,
we will primarily focus on the following non-decreasing graph-parameters:
the number $n$ of nodes of the graph $G$,
i.e.,~$\abs{V(G)}$,
the maximum degree $\Delta=\Delta(G)$ of $G$, i.e.,~$\max\sst{\deg_G(u)}{u\in V(G)}$,  and
the arboricity $a=a(G)$ of $G$, i.e.,~the least number of acyclic subgraphs
of $G$ whose union is $G$.\\

\paragraph{Local algorithms:} Consider a problem $\Prob$ and a collection of
instances $\cF$ for $\Prob$.  An algorithm for $\Prob$ and $\cF$ takes as
input an instance $(G,\inp)\in\cF$ and must terminate with an output vector
$\out$  such that $(G,\inp,\out)\in \Prob$.  We consider the $\cal{LOCAL}$
model (cf.,~\cite{Pel00}).
During the execution of a \emph{local} algorithm~$\cA$, all processors are
woken up simultaneously and computation proceeds in fault-free synchronous
rounds. In each round, every node may
send messages of unrestricted size to its neighbors and may perform
arbitrary computations on its data. A message that is sent in a round $r$,
arrives at its destination before the next round $r+ 1$~starts.  It must be
guaranteed that after a finite number of rounds, each node $v$ terminates
by writing  some final output value~$\out(v)$ in its designated output variable (informally, this means that we may assume that  a node ``knows'' that its
output is indeed its final output.) The algorithm $\cA$ is \emph{correct}
if for every instance $(G,\inp)\in \cF$, the resulting output vector $\out$
satisfies $(G,\inp,\out)\in \Prob$.

Let $\cA$ be a local deterministic  algorithm for $\Prob$ and $\cF$.  The
\emph{running time} of $\cA$ over a particular instance $(G,\inp)\in\cF$,
denoted $T_{\cA}(G,\inp)$, is the number of rounds from the beginning of the
execution of $\cA$ until all nodes terminate. The running time of $\cA$
is typically evaluated with respect to a collection $\Lambda$ of  parameters
$\q_1,\ldots,\q_\ell$.  Specifically, it is compared to a non-decreasing
function $f\colon \Natural ^\ell\to \Reals^+$; we say that $f$ is  an
upper bound for the running time of $\cA$ with respect to $\Lambda$ if
$T_{\cA}(G,\inp)\leq f(\q^*_1,\ldots,\q^*_\ell)$ for every instance
$(G,\inp)\in F$ with parameters
$\q^*_i=\q_i(G,\inp)$ for $i\in\icc{1}{\ell}$. Let us stress that we
assume throughout the paper that all the functions bounding running times of algorithms are
non-decreasing.

For an integer $i$, the algorithm $\cA$ \emph{restricted to $i$ rounds}
is the local algorithm $\cB$ that consists of running $\cA$ for precisely
$i$ rounds. The output $\out(u)$ of $\cB$ at a vertex $u$ is defined as
follows: if, during the $i$ rounds, $\cA$ outputs a value $y$ at $u$ then
$\out(u)= y$; otherwise we let $\out(u)$ be an arbitrary value,
e.g.,~``$0$''.\\

A \emph{randomized} local algorithm is a local algorithm that allows each node to use
random bits in its local computation|the random bits used by different
nodes being independent.
A randomized (local) algorithm $\cA$ is \emph{Las Vegas} if
its correctness is guaranteed with probability $1$.
The \emph{running time} of
a Las Vegas algorithm $\cA_{LV}$ over a particular configuration
$(G,\inp)\in\cF$, denoted $T_{\cA_{LV}}(G,\inp)$, is a random variable, which
may be unbounded. However, the expected value of $T_{\cA_{LV}}(G,\inp)$ is
bounded.
A \emph{Monte-Carlo} algorithm $\cA_{MC}$
with guarantee $\rho\in\ioc{0}{1}$ is a randomized algorithm that takes a configuration $(G,\inp)\in\cF$ as input and  terminates before
a predetermined time $T_{\cA_{MC}}(G,\inp)$ (called the \emph{running time} of $\cA_{MC}$). It is certain that the output vector produced
by Algorithm $\cA_{MC}$ is a solution to $\Prob$ with probability at least $\rho$.
Finally, a \emph{weak Monte-Carlo} algorithm $\cA_{WMC}$ with guarantee
$\rho\in\ioc{0}{1}$ guarantees that with probability at least $\rho$,
the algorithm outputs a correct solution by its running time
$T_{\cA_{WMC}}(G,\inp)$.
(Observe that
it is not certain that any execution of the weak Monte-Carlo algorithm will terminate by the
prescribed time $T_{\cA_{WMC}}(G,\inp)$, or even terminate at all.)
Note that a Monte-Carlo algorithm is in particular a weak Monte-Carlo algorithm, with the same
running time and guarantee. Moreover, for any constant
$\rho\in\ioc{0}{1}$, a Las Vegas algorithm running in expected time $T$ is a
weak Monte-Carlo algorithm with guarantee $\rho$ running in time
$\frac{T}{1-\rho}$, by Markov's inequality.

\paragraph{Synchronicity and time complexity:}
Many \LOCAL{} algorithms happen to have different
termination times at different nodes. On the other hand, most of
the algorithms rely on a simultaneous wake-up time for all nodes. This
becomes an issue when one wants to run an algorithm $\cA_1$ and subsequently
an algorithm $\cA_2$ taking the output of $\cA_1$ as input.
Indeed, this problem amounts to running $\cA_2$ with
non-simultaneous wake-up times: a node $u$ starts $\cA_2$ when it
terminates $\cA_1$.

As observed (e.g., by Kuhn~\cite{Kuh09}), the concept of
synchronizer~\cite{Awe85}, used in the context of local algorithms, allows one to
transform an asynchronous  local algorithm to a synchronous one that runs in the same asymptotic time complexity.
Hence,  the synchronicity assumption can actually be removed.
Although the  standard asynchronous model introduced still assumes
a simultaneous wake-up time, it can be easily verified that the technique still applies with
non-simul\-ta\-neous wake-up times if a node can buffer messages received
before it wakes up, which is the case when running an algorithm after
another.

However, we have to adapt the notion of running time.
The computation that a node performs in time $t$ depends on its interactions
with nodes at distance at most $t$ in the network. 
More precisely, we say that a node $u$
terminates in time $t$ if it terminates at most $t$ rounds after all
nodes in $B_G(u,t)$ have woken up. The termination time of $u$ is the
least $t$ such that $u$ terminates in time $t$.
We finally define the running time of an algorithm as the maximum
termination time over all nodes and all wake-up patterns.

Given two local algorithms $\cA_1$ and $\cA_2$, we let
$\cA_1;\cA_2$ be the process of running $\cA_2$ after $\cA_1$.
It turns out that the running time of $\cA_1;\cA_2$ is bounded
from above by the sum of the running times of $\cA_1$ and $\cA_2$.
This can be shown as follows.
Let $t_1$ and $t_2$ be the running times of $\cA_1$ and $\cA_2$
respectively.
Consider a node $u$ and let $t_0$ be the last wake-up time of a node in
the ball $B_G(u,t_1+t_2)$.
At $t_0+t_1$, all nodes in $B_G(u,t_2)$ have terminated $\cA_1$ and are
thus considered as woken up
for the execution of $\cA_2$. Node $u$ thus terminates before
$(t_0+t_1)+t_2$. As this is true for any node $u$
independently of the wake-up pattern, $\cA_1;\cA_2$ has running time at
most $t_1+t_2$. This establishes the following observation.

\begin{observ}\label{obs:runafter}
For any two local algorithms $\cA_1$ and $\cA_2$,
the running time of $\cA_1;\cA_2$
is bounded by the sum of the running times of $\cA_1$ and $\cA_2$.
\end{observ}

Another useful remark is that
a simultaneous wake-up algorithm running in time $t$ can be
emulated in a non-simultaneous wake-up environment with running time at most $t$ using the simple $\alpha$
synchronizer. Indeed, consider a node $u$ and let
$t_0$ be the last wake-up time of a node in the ball $B_G(u,t)$.
At time $t_0$, all nodes in $B_G(u,t)$ perform (or have performed) round $0$.
Using the $\alpha$ synchronizer a node can perform round $i$ when
all its neighbors have performed round $i-1$. We can thus show by
induction on $i$ that all nodes in $B_G(u,t-i)$ perform
(or have performed) round $i$ at time $t_0+i$. The node $u$ thus
terminates in time $t$.
This implies that the running time of the
emulation of the algorithm with the $\alpha$ synchronizer is at most $t$.
Therefore, in the remaining of the paper we may assume without loss of
generality that all nodes wake up simultaneously at time $0$.

\paragraph{Local algorithms requiring parameters:}
Fix a problem $\Prob$ and let $\cF$ be a collection of instances for
$\Prob$.
Let $\Gamma$ be a collection of
parameters $\para_1,\ldots,\para_r$ and let $\cA$ be a
local algorithm.
We say that $\cA$ \emph{requires} $\Gamma$ if the code of $\cA$,
which is executed by each node of the input configuration, uses a value
$\tilde{\para}$ for each parameter $\para\in \Gamma$. (Note that this value is
thus the same for all nodes.)
The value $\tilde{\para}$ is a \emph{guess} for~$\para$.
A collection of guesses for the parameters in $\Gamma$ is denoted by
$\tilde{\Gamma}$ and an algorithm $\cA$ that requires $\Gamma$ is denoted by $\cA^\Gamma$.
An algorithm that does not require any parameter~is~called~\emph{uniform}.

Consider an instance $(G,\inp)\in\cF$, a collection $\Gamma$ of parameters
and a parameter $\para\in\Gamma$. A guess $\tp$ for $\para$ is termed \emph{good}
if $\tp\geq \para(G,\inp)$, and the guess $\tp$ is called \emph{correct} if $\tp= \para(G,\inp)$.
We typically write correct guesses and collection of correct guesses with a
star superscript, as in $\para^*$ and ${\Gamma}^*(G,\inp)$, respectively. When $(G,\inp)$ is clear from the context, we may use the notation ${\Gamma}^*$ instead of ${\Gamma}^*(G,\inp)$.

An algorithm  $\cA^\Gamma$ \emph{depends} on $\Gamma$ if for every instance
$(G,\inp)$ $\in \cF$, the correctness of $\cA^\Gamma$ over $(G,\inp)$ is
guaranteed only when $\cA^\Gamma$ uses a collection $\tilde{\Gamma}$ of good
guesses.

Consider an algorithm $\cA^\Gamma$ that depends on a collection $\Gamma$ of
parameters $\para_1,\ldots,\para_r$ and fix an instance $(G,\inp)$.
Observe that the  running time of $\cA^\Gamma$ over $(G,\inp)$ may be different for different
collections of guesses $\tilde{\Gamma}$, in other words, the  running time over $(G,\inp)$
may be a function of $\tilde{\Gamma}$. Recall that when we consider
an algorithm that does not require parameters, we still typically evaluate its running time
with respect to a collection of parameters $\Lambda$. We generalize this to the case
where the algorithm depends on $\Gamma$ as follows.

Consider two collections $\Gamma$ and $\Lambda$ of parameters
$\para_1,\ldots, \para_r$ and $\q_1,\ldots,\q_{\ell}$, respectively.
Some parameters may belong to both $\Gamma$ and $\Lambda$.
Without loss of generality, we shall always assume that
$\{\para_{r'+1},\ldots, \para_r\}\cap
\{\q_{r'+1},\ldots,\q_\ell\}=\emptyset$ for some
$r'\in\icc{0}{\min\{r,\ell\}}$ and $\para_i=\q_i$ for every
$i\in\icc{1}{r'}$.
Notice that $\Gamma\setminus \Lambda=\{\para_{r'+1},\para_{r'+2},\ldots,\para_r\}$.
A function $f\colon(\Reals^+)^{\ell}\to\Reals^+$
\emph{upper bounds} the  running time
of $\cA^{\Gamma}$ with respect to $\Gamma$ and  $\Lambda$ if the
 running time  $T_{\cA^{\Gamma}}(G,\inp)$ of $\cA^{\Gamma}$  for $(G,\inp)\in\cF$ using a collection of good guesses $\tilde{\Gamma}=\{\tp_1,\ldots,\tp_r\}$
is at most $f(\tp_1,\ldots,\tp_{r'},\ldots,\q^*_{\ell})$, where
$\q^*_i=\q_i(G,\inp)$ for $i\in\icc{r'+1}{\ell}$. Note that we do not put any restriction
on the  running time of $\cA^{\Gamma}$ over $(G,\inp)$ if some of the guesses in $\tilde{\Gamma}$
are not good. In fact, in such a case, the algorithm may not even terminate
and it may also produce wrong results.

For simplicity of notation, when $\Gamma$ and $\Lambda$ are clear
from the context, we
say that $f$ upper bounds the  running time of $\cA^{\Gamma}$, without
writing that it is with respect to $\Gamma$ and  $\Lambda$.

The set $\Gamma$ is \emph{weakly-dominated} by $\Lambda$ if for each
$j\in\icc{r'+1}{r}$, there exists an index $i_j\in\icc{1}{\ell}$ and
an ascending function $g_j$ such that $g_j(\para_j(G,\inp))\leq \q_{i_j}(G,\inp)$
for every instance $(G,\inp)\in\cF$.
(For example, $\Gamma=\{\Delta\}$ is weakly-dominated by  $\Lambda=\{n\}$,
since $\Delta(G,\inp)\leq n(G,\inp)$ for any $(G,\inp)$.)

\section{Pruning Algorithms}\label{sec:pruning}

\subsection{Overview}

Consider  a problem $\Prob$ in the centralized setting and an efficient randomized
Monte-Carlo algorithm $\cA$ for $\Prob$. A known method for
transforming $\cA$ into a Las Vegas algorithm is based on repeatedly doing the following.
Execute $\cA$ and, subsequently, execute an algorithm that
checks the validity of the output.  If the checking fails then continue,
and otherwise, terminate, i.e., break the loop.  This transformation can
yield  a Las Vegas algorithm whose expected running time is  similar
to the running time of the Monte-Carlo algorithm provided that
the checking mechanism used is efficient.

If we wish to come up with a similar transformation in the context of
locality, a first idea would be to consider a local algorithm that checks
the validity of a tentative output vector. This concept has been studied
from various perspectives (cf., e.g.,~\cite{FKP11,KKP10,NaSt95}).  However,
such fast local checking~procedures can only guarantee that faults are
detected by at least one node, whereas to restart the Monte-Carlo algorithm,
all nodes should be aware of a fault.  This notification~can~take diameter
time and will thus violate the locality constraint (i.e. running in a bounded
number of rounds).

Instead of using local checking procedures, we introduce the notion of
\emph{pruning algorithms}. Informally, this is a  mechanism that identifies
``valid areas'' where the tentative output vector~$\hout$ is valid and \emph{prunes} these
areas, i.e.,  takes them out of further consideration. A pruning
algorithm $\cP$  must satisfy two properties, specifically, (1)  \emph{gluing}:  $\cP$ must make sure that the current solution
on these ``pruned areas'' can be extended to a valid solution for the remainder of
the graph, and (2)  \emph{solution detection}:  if  $\hout$ is a valid global solution to begin with then $\cP$ should prune all nodes.
Observe that since the empty output
  vector is  a solution for the empty input graph then (1) implies the converse of (2), that is, if
$\cP$ prunes all nodes, then $\hout$ is a valid global solution.

Now, given a Monte-Carlo algorithm $\cA$ and a pruning algorithm $\cP$ for
the problem, we can transform $\cA$ into a Las Vegas algorithm by
executing the pair of algorithms $(\cA;\cP)$ in iterations,
where each iteration $i$ is executed on the graph $G_i$ induced by the set
of  nodes that were not pruned in previous iterations ($G_1$ is the initial
graph $G$).  If, in some iteration $i$, Algorithm $\cA$ solves the problem
on the graph $G_i$, then the solution detection property guarantees that the
subsequent pruning algorithm will prune all nodes in $G_i$ and hence at that
time all nodes are pruned and the execution terminates. Furthermore, using
induction, it can be shown that the gluing property guarantees that the
correct solution to $G_i$ combined with the outputs of the previously pruned
nodes forms a solution to $G$.

\subsection{Pruning Algorithms: Definition and Examples.}
We now formally define pruning algorithms.
Fix a problem  $\Prob$ and a family of instances $\cF$ for $\Prob$.  A
\emph{pruning} algorithm $\cP$ for $\Prob$ and $\cF$ is a uniform algorithm
that takes as input a triplet $(G,\inp,\hout)$, where $(G,\inp)\in\cF$ and
$\hout$ is some tentative output vector (i.e. an output vector that may be incorrect), and returns a configuration
$(G',\inp')$ such that $G'$ is an induced subgraph of $G$ and $(G',\inp')\in\cF$.
Thus, at each node $v$ of $G$, the pruning algorithm $\cP$ returns a bit $b(v)$
that indicates whether $v$ belongs to some selected subset $W$ of nodes of $G$ to be pruned.
(Recall that the idea is to assume that nodes in $W$ have a satisfying tentative output
value and that they can be excluded from further computations.)
Note that $\inp'$ may be different than $\inp$ restricted to the nodes outside
$W$.

Consider now an output vector $\out'$ for the nodes in $V(G')$.  The
\emph{combined} output vector $\out$ of the vectors $\hout$  and
$\out'$  is the output vector that is a combination of  $\hout$ restricted
to the nodes in $W$ and $\out'$ restricted to  the nodes in~$G'$,
i.e.,~$\out(v)=\hout(v)$ if $v\in W$ and $\out(v)=\out'(v)$
otherwise.  A pruning algorithm $\cP$ for a problem $\Prob$ must satisfy the following~properties.

\begin{itemize}
\item \textbf{Solution detection:} if $(G,\inp,\hout)\in\Prob$, then
$W=V(G)$, that is,
$\cP(G,\inp,\hout)=(\emptyset,\emptyset)$.\\
\item \textbf{Gluing:} if  $\cP(G,\inp,\hout)=(G',\inp')$ and $\out'$ is a solution for
  $(G',\inp')$, i.e., $(G',\inp',\out')\in \Prob$, then the combined output vector $\out$
is a solution for $(G,\inp)$, i.e, $(G,\inp,\out)\in \Prob$.
\end{itemize}

As mentioned earlier, it follows from the gluing property that  if the pruning
algorithm $\cP$ returns $(\emptyset,\emptyset)$
(i.e., all nodes are pruned) then $(G,\inp,\hout)\in \Prob$.

The pruning algorithm $\cP$ is \emph{monotone with respect to a parameter}
$\para$ if $\parameter(G,\inp)\ge \para(\cP(G,\inp,\hout))$ for every $(G,\inp)\in\cF$
and every tentative output vector $\hout$.
The pruning algorithm $\cP$ is \emph{monotone with respect to a
collection of parameters} $\Gamma$ if $\cP$ is monotone with respect to
every parameter $\para\in \Gamma$. In such a case, we may also say that
$\cP$ is \emph{$\Gamma$-monotone}.
The following assertions follow from the definitions.

\begin{observ}\label{claim:mono} Let $\cP$ be a pruning algorithm.
\begin{enumerate}
\item Algorithm $\cP$ is monotone with respect to any non-decreasing
graph-parameter.
\item If the configuration $(G',\inp')$ returned by $\cP$ satisfies
$\inp'(v)=\inp(v)$ for every $v\in V(G)\setminus W$ and every
configuration $(G,\inp)$, then $\cP$ is monotone
with respect to any non-decreasing parameter.
\end{enumerate}
\end{observ}

For simplicity,  we impose that the running time of a pruning algorithm $\cP$
be constant.
We shall elaborate on general pruning algorithms at the end of the paper.

We now give examples of pruning algorithms for several problems,
namely, $(2,\beta)$-Ruling set for a constant integer $\beta$ (recall that  MIS is precisely  $(2,1)$-Ruling set), and maximal matching.
These pruning algorithms ignore the input of the nodes.
Thus, by Observation~\ref{claim:mono}, they
are monotone with respect to any non-decreasing parameter.

\paragraph{The $(2,\beta)$-ruling set pruning algorithm:} Let $\beta$ be a constant integer. We define a pruning algorithm $\cP_{(2,\beta)}$ for the  $(2,\beta)$-ruling set problem as follows. Given a triplet $(G,\inp,\hout)$,
let $W$ be the set of nodes $u$ satisfying one of the following two conditions.
\begin{itemize}
\item
$\hout(u)=1$ and $\hout(v)=0$  for all  $v \in N(u)$, or
\item
$\hout(u)=0$ and $\exists v\in B_G(u,\beta)$ such that
$\hout(v)=1$ and  $\hout(w)=0$  for all  $w \in N(v)$.
\end{itemize}

The question of whether a node $u$ belongs to $W$ can be determined by
inspecting $B_G(u,1+\beta)$, the ball of radius $1+\beta$ around $u$. Hence, we obtain the following.\\

\begin{observ}\label{claim:universal-ruling}
Algorithm $\cP_{(2,\beta)}$  is a pruning algorithm for the $(2,\beta)$-ruling set
problem, running in time $1+\beta$. (In particular, $\cP_{(2,1)}$  is a pruning algorithm for the MIS problem running in time 2.)
Furthermore, $\cP_{(2,\beta)}$ is monotone with
respect to any non-decreasing parameter.
\end{observ}

\paragraph{The maximal matching problem:}
We define a pruning algorithm $\cP_{\mm}$ as follows.
Given a tentative output vector $\hout$, recall that $u$
and $v$ are matched when $u$ and $v$ are neighbors, $\hout(u)=\hout(v)$ and
$\hout(w)\neq\hout(u)$ for every $w\in(N_G(u)\cup N_G(v))\setminus\{u,v\}$.
Set $W$ to be the set of nodes $u$ satisfying one of  the following conditions.
\begin{itemize}
\item
$\exists v\in N(u)$ such that $u$ and $v$ are matched, or
\item
$\forall v\in N(u)$, $\exists w\neq u$ such that $v$ and $w$ are matched.\\
\end{itemize}

\begin{observ}\label{claim:universal-mm}
Algorithm $\cP_{\mm}$ is a  pruning algorithm for $\mm$ whose
running time is 3. Furthermore, $\cP_{\mm}$ is monotone with respect to any parameter.
\end{observ}

We exhibit several applications of pruning algorithms.
The main application  appears in the next
section, where we show how pruning algorithms can be used to transform
non-uniform algorithms into uniform ones.
Before we continue, we need the concept of alternating algorithms.

\subsection{Alternating Algorithms}
A pruning algorithm can be used in conjunction with a sequence of
algorithms as follows.
Let $\cF$ be a collection of
instances for some problem $\Prob$. For each $i\in\mathbf{N}$,
let $\cA_i$ be an algorithm defined on $\cF$.
Algorithm $\cA_i$ does not necessarily
solve $\Prob$, it is only assumed to produce some output.

Let $\cP$ be a pruning algorithm for $\Prob$ and $\cF$, and for $i\in\mathbf{N}$, let
$\cB_i= (\cA_i;\cP)$, that is, given an
instance $(G,\inp)$, Algorithm $\cB_i$ first executes
$\cA_i$, which returns an output vector $\out$
for the nodes of $G$ and,
subsequently, Algorithm $\cP$ is executed over the triplet
$(G,\inp,\out)$.
We define the \emph{alternating algorithm} $\pi$ for
$(\cA_i)_{i\in\mathbf{N}}$ and $\cP$ as follows.  The
alternating algorithm $\pi=\pi((\cA_i)_{i\in\mathbf{N}},\cP)$ executes the algorithms
$\cB_i$ for $i=1,2,3,\ldots$ one after the other: let $(G_1,\inp_1)=(G,\inp)$
be the initial instance given to $\pi$;
for $i\in\mathbf{N}$, Algorithm $\cA_i$ is executed on the
instance $(G_{i},\inp_{i})$ and returns the output vector~$\out_{i}$.
The subsequent pruning algorithm $\cP$ takes the triplet $(G_{i},\inp_{i},\out_{i})$ as
input
and produces the instance $(G_{i+1},\inp_{i+1})$. See Figure~\ref{fig:alt}
for a schematic view of an alternating algorithm. The definition extends to
a finite sequence $(\cA_i)_{i=1}^{k}$ of algorithms in
a natural way; the alternating algorithm for $(\cA)_{i=1}^{k}$ and $\cP$
being $A_1;\cP;A_2;\cP;\cdots;A_k;\cP$.

\begin{figure*}[!ht]
  \begin{center}
    \begin{tikzpicture}[scale=.95]
      \node (1) at (0,0) {$(G_1,\inp_1)$};
      \node (v2) at (2.6*\unit,0) {};
  \draw[->,>=latex,dashed] (1)--(v2) node[midway,above] {$\cB_1=\cA_1;\cP$}
    node[midway,below=1.5*\unit] (I1) {$(G_1,\inp_1,\out_1)$}
    node[right] (2) {$(G_2,\inp_2)$};
  \draw[->,>=latex,dashed] (2)--++(2.6*\unit,0) node[midway,above] {$\cB_2=\cA_2;\cP$}
    node[midway,below=1.5*\unit] (I2) {$(G_2,\inp_2,\out_2)$}
    node[right] (3) {$(G_3,\inp_3)$};
  \draw[->,>=latex,dotted] (3)--++(1.8*\unit,0)
  node[right] (4) {$(G_i,\inp_i)$};
  \draw[->,>=latex,dashed] (4)--++(2.6*\unit,0) node[midway,above] {$\cB_i=\cA_i;\cP$}
    node[midway,below=1.5*\unit] (I3) {$(G_i,\inp_i,\out_i)$}
    node[right] (5) {$(G_{i+1},\inp_{i+1})$};
  \draw[->,>=latex,dotted] (5)--++(1.8*\unit,0);
        \draw[->,>=latex] (1)--(I1) node[sloped,midway,above] {$\cA_1$};
        \draw[->,>=latex] (I1)--(2) node[sloped,midway,above] {$\cP$};
        \draw[->,>=latex] (2)--(I2) node[sloped,midway,above] {$\cA_2$};
        \draw[->,>=latex] (I2)--(3) node[sloped,midway,above] {$\cP$};
        \draw[->,>=latex] (4)--(I3) node[sloped,midway,above] {$\cA_{i}$};
        \draw[->,>=latex] (I3)--(5) node[sloped,midway,above] {$\cP$};
    \end{tikzpicture}
    \caption{Schematic view of an alternating algorithm for
    $(\cA_i)_{i\in\mathbf{N}}$ and
    $\cP$.}\label{fig:alt}
  \end{center}
\end{figure*}

The alternating algorithm $\pi$ \emph{terminates} on an instance
$(G,\inp)\in\cF$ if there exists $k$ such that  $V(G_k)=\emptyset$.  Observe
that in such a case, the tail $\cB_{k};
\cB_{k+1};\cdots$ of $\pi$ is trivial. The output
vector $\out$ of a terminating alternating algorithm $\pi$ is defined
as the combination of the output vectors $ \out_{1},
\out_{2},\out_{3},\ldots$.  Specifically, for $s\in\icc{1}{k-1}$, let
$W_s= V(G_{s})\setminus V(G_{s+1})$.
(Observe that $W_s$ is precisely the set of nodes pruned by the execution
of the pruning algorithm $\cP$ in $\cB_{s}$.)
Then, the collection $\sst{W_s}{1\leq s \leq
k-1}$ forms a partition of $V(G)$, i.e., $W_s\cap W_{s'}=\emptyset$ if
$s\neq s'$, and $\cup_{s=1}^{k-1} W_s=V(G)$. Observe that the final output
$\out$ of $\pi$ satisfies $\out(u)=\out_{s}(u)$ for every node $u$,
where $s$ is such that $u\in W_s$.
In other words, the output of $\pi$
restricted to the nodes in $W_s$ is precisely the corresponding output of
Algorithm $\cA_{s}$. The next observation readily follows from the definition
of pruning algorithms.

\begin{observ}\label{sequence}
Consider a problem $\Prob$, a collection of instances $\cF$, a sequence  of algorithms $(\cA_i)_{i\in\mathbf{N}}$ defined on $\cF$ and
a pruning algorithm $\cP$ for $\Prob$ and $\cF$. Consider
the alternating algorithm $\pi=\pi((\cA_i)_{i\in\mathbf{N}},\cP)$
for $(\cA_i)_{i\in\mathbf{N}}$ and $\cP$.
If $\pi$ terminates on an instance  $(G,\inp)\in\cF$ then it produces a correct output $\out$, that is, $(G,\inp,\out)\in\Prob$.
\end{observ}

In what follows, we often produce a sequence of algorithms
$(\cA_i)_{i\in\mathbf{N}}$ from an algorithm $\cA^\Gamma$ requiring
a collection $\Gamma$ of non-decreasing parameters. The general idea is to design a
sequence of guesses $\tilde\Gamma_i$
and let $\cA_i$ be algorithm
$\cA^\Gamma$ provided with guesses $\tilde\Gamma_i$.
Given a pruning algorithm $\cP$, we obtain a uniform
alternating algorithm $\pi=\pi((\cA_i)_{i\in\mathbf{N}},\cP)$.
The  sequence of guesses is designed  such that
for any configuration  $(G,\inp)\in\cF$, there exists some $i$ for which
$\tilde\Gamma_i$ is a collection of good guesses for
$(G,\inp)$. The crux is to obtain an execution time
for $\cA_1;\cP;\cdots;\cA_i;\cP$ of the same order
as the execution time of $\cA^\Gamma$ provided with the collection
$\Gamma^*(G,\inp)$ of correct guesses.

\section{The General Method}
We now turn to the main application of pruning algorithms discussed in this
paper, that is, the construction of a transformer taking a non-uniform
algorithm $A^\Gamma$ as a black box and producing a uniform one that enjoys
the same (asymptotic) time complexity as the original non-uniform algorithm.

We begin with a few illustrative examples of our method in
Subsection~\ref{sub:illus}. Then, the general framework of our transformer
is given in Subsection~\ref{sub:genframe}. This subsection introduces
a concept of ``sequence-number functions'' as well as the
a fundamental construction used in our forthcoming algorithms.

Then, in Subsection~\ref{sub:deterministic}, we consider  the deterministic
setting: a somewhat restrictive, yet useful, transformer is given in
Theorem~\ref{simple}.  This transformer considers a single set $\Gamma$ of
non-decreasing parameters $\para_1,\ldots,\para_{\ell}$, and assumes
that (1) the given non-uniform algorithm $\cA^{\Gamma}$ depends on ${\Gamma}$
and (2) the running time of $\cA^{\Gamma}$ is evaluated with respect to the parameters in
$\Gamma$. Such a situation is customary, and occurs for instance for the
best currently known MIS Algorithms~\cite{BaEl09,Kuh09,PaSr96}
as well as for the maximal matching algorithm of Hanckowiak \emph{et
al.}~\cite{HKP01}.
As a result, the transformer given by Theorem~\ref{simple} can be used
to transform each of these algorithms into a uniform
one with asymptotically the same time complexity.

The transformer of Theorem~\ref{simple} is extended to the randomized
setting in Subsection~\ref{sub:randomized}.
In Subsection~\ref{sub:general}, we establish Theorem~\ref{thm:parameters}, which
generalizes both Theorem~\ref{simple} and Theorem~\ref{thm:randomized}. Finally, we
conclude the section with Theorem~\ref{thm:min} in
Subsection~\ref{sec:min}, which shows how to manipulate several uniform
algorithms that run in unknown times to obtain a uniform algorithm that runs
as fast as the fastest algorithm among those given algorithms.

\subsection{Some Illustrative Examples}\label{sub:illus}
The basic idea is very simple. Consider a problem for which~we have a
pruning algorithm $\cP$, and a non uniform~algorithm~$\cA$ that requires the
upper bounds on some parameters to be part of the input. To obtain a uniform
algorithm, we execute the pair of algorithms $(\cA;\cP)$ in iterations,
where each iteration  executes $\cA$ using a specific set of guesses for the
parameters. Typically, as iterations proceed, the guesses for the parameters
grow larger and larger until we reach  an iteration $i$ where all the
guesses are larger than the actual value of the corresponding parameters. In
this iteration, the operation of $\cA$ on~$G_i$
 using such guesses
guarantees a correct solution on $G_i$ ($G_i$ is the graph induced by the set
of  nodes that were not pruned in previous iterations). The solution detection property of the pruning algorithm
then guarantees that the execution terminates in this
iteration  and hence, Observation~\ref{sequence} guarantees that the output of all nodes combines to a
global solution on $G$.  To bound the running time, we shall make sure that
the total running time is dominated by the running time of the last
iteration, and that this last iteration is relatively fast.

There are various delicate points when using this general strategy. For
example, in iterations where incorrect guesses are used, we have no control
over the behavior of the non-uniform algorithm $\cA$ and, in particular, it
may run for too many rounds, perhaps even indefinitely.  To overcome this
obstacle, we allocate a prescribed  number of rounds for each iteration;  if
Algorithm $\cA$  reaches this time bound without outputting at some node
$u$, then we force it to terminate with an arbitrary output. Subsequently,
we run the pruning algorithm and proceed to the next iteration.

Obviously,  this simple approach of running in iterations and increasing the
guesses from iteration to iteration is hard\-ly new.  It was used, for
example,
in the context of wireless networks to compute estimates of parameters (cf.,
e.g.,~\cite{BeYa76,NaOl02}), or to estimate the number of faults~\cite{KuPe00}.
It was also used by Barenboim and Elkin~\cite{BaEl10b} to avoid the necessity
of having an upper bound on the arboricity $a$ in one of their MIS algorithms,
although their approach increases the running time by $\log^* n$.
One of the main contributions of the current
paper is the formalization and generalization of this technique, allowing it
to be used for a wide varieties of problems and applications. Interestingly,
note that we are only concerned with getting rid of the use of some global
parameters in the code of local algorithms,
and not with obtaining estimates for them (in particular, when
our algorithms terminate, a node has no guarantee to have
upper bounds on these global parameters).

To illustrate the method, let us consider the non-uniform MIS algorithm of
Panconesi and Srinivasan~\cite{PaSr96}. The code of Algorithm $\cA$ uses
an upper bound $\tilde{n}$ on the number of nodes $n$, and runs
in time at most $f(\tilde{n})=2^{\bigo{\sqrt{\log \tilde{n}}}}$. Consider a
pruning algorithm $\cP_{\MIS}$ for MIS (such an algorithm is given by
Observation \ref{claim:universal-ruling}).  The following sketches our
technique for obtaining a uniform MIS algorithm.  For each integer $i$,
set $n_i=\max\sst{a\in\mathbf{N}}{f(a)\le2^i}$.
In Iteration $i$,
for $i=1,2,\ldots$, we first execute Algorithm $\cA$ using the guess $n_i$
(as an input serving as an upper bound for the number of nodes) for
precisely $2^i$ rounds.
Subsequently, we run the pruning algorithm $\cP_{\MIS}$.  When the pruning
algorithm terminates, we execute the next iteration on the non-pruned nodes.
Let $s$ be the integer such that $2^{s-1}<f(n)\leq 2^s$, where $n$ is the
number of nodes of the input graph. By the definition, $n\leq n_s$.
Therefore, the application of $\cA$ in Iteration $s$ uses a guess $n_s$ that
is indeed good, i.e.,~larger than the number of nodes.  Moreover, this
execution of $\cA$
is completed before the prescribed deadline of $2^{s}$ rounds expires
because its running time is  at most $f(n_s)\leq 2^{s}$. Hence, we are
guaranteed to have a correct solution by the end of Iteration~$s$. The
running time is thus at most $\sum_{i=1}^s 2^i=\bigo{f(n)}$.

This method can sometimes be extended to simultaneously remove the use
of several parameters in the code of a local algorithm.  For example, consider the MIS
algorithm of Barenboim and Elkin~\cite{BaEl09} (or that of
Kuhn~\cite{Kuh09}), which uses upper bounds $\tilde{n}$
and $\tilde{\Delta}$ on $n$ and $\Delta$, respectively,  and runs in time
$f(\tilde{n},\tilde{\Delta})=f_1(\tilde{n})+f_2(\tilde{\Delta})$, where
$f_1(\tilde{\Delta})=\bigo{\tilde{\Delta}}$ and
$f_2(\tilde{n})=\bigo{\log^*\tilde{n}}$.  The following sketches our method
for obtaining a corresponding uniform MIS algorithm that runs in time
$\bigo{f(n,\Delta)}$.  For each integer $i$, set
$n_i=\max\sst{a\in\mathbf{N}}{f_1(a)\le2^i}$ and
$\Delta_i=\max\sst{a\in\mathbf{N}}{f_2(a)\le2^i}$.
In Iteration $i$, for $i=1,2,\ldots$, we first execute Algorithm $\cA$ using
the guesses $n_i$ and $\Delta_i$, but this time the execution lasts for
precisely $2\cdot2^{i}$ rounds. (The factor $2$ in the running time of an
iteration follows from the fact that the running time is the sum of two non-negative ascending
functions of two different parameters, namely $f_1(n)$ and $f_2(\Delta)$.)
Subsequently, we run the pruning algorithm~$\cP_{\MIS}$, and as before, when
the pruning algorithm terminates, we execute the next iteration on the
non-pruned nodes.  Now, let $s$ be the integer such that
$2^{s-1}<f(n,\Delta)\leq 2^s$. By the definition, $n\leq n_s$ and
$\Delta\leq \Delta_s$. Hence, the application of $\cA$ in Iteration $s$ uses
guesses that are indeed good. This execution of $A$ is completed before the
prescribed deadline of $2^{s+1}$ rounds expires because its running time is
at most $f_1(n_s)+f_2({\Delta_s})\leq 2^{s+1}$. Thus, the algorithm consists
of at most $s$ iterations. Since the running time of the whole execution is
dominated by the running time of the last iteration, the total running time
is $\bigo{2^{s+1}}=\bigo{f(n,\Delta)}$.

The above discussion shall be formalized in Theorem~\ref{simple}.
Before stating and proving it, though, we need one more concept, called
``sequence-number function'', which
gives a certain measure for the ``separation'' between the
variables in a function defined over $\Natural^\ell$.

\subsection{The General Framework}\label{sub:genframe}
Consider a  function $f\colon \Natural ^\ell\to \Reals^+$.
A \emph{set-sequence for $f$} is a sequence $(S_f(i))_{i\in\Natural}$
such that for every $i\in\Natural$,
\begin{enumerate}[label=\textup{(\roman{*})}, ref={(\roman{*})}]
    \item $S_f(i)$ is a finite subset (possibly empty) of $\Natural^\ell$; and
    \item if $\underline{y}\in\Natural^\ell$ and
        $f(\underline{y})\le i$, then $\underline{y}$ is
        dominated by a vector $\underline{x}$ that belongs to $S_f(i)$.
\end{enumerate}
The set-sequence $(S_f(i))_{i\in\Natural}$ is \emph{bounded} if
there exists a positive number $c$ such that
\[
\forall  i\in\Natural,\,
\forall \underline{x}\in S_f(i),\quad f(\underline{x})\le c\cdot i.
        \]
The constant $c$ is referred to as the \emph{bounding constant of
$(S_f(i))_{i\in\Natural}$}. Note that a set-sequence may contain empty sets.

A function $s_f\colon \Natural \to \Natural$
is a \emph{sequence-number function for $f$} if
\begin{enumerate}[label=\textup{(\arabic{*})}, ref={(\arabic{*})}]
    \item\label{snf:1} $s_f$ is moderately-slow; and
    \item\label{snf:2} there exists a bounded set-sequence
    $(S_f(i))_{i\in\Natural}$ for $f$ such that
\[
\forall i\in\Natural,\quad\abs{S_f(i)}\le s_f(i).
\]
\end{enumerate}

For example, consider the case where $f\colon \Natural ^\ell\to \Reals$ is additive,
i.e., $f(x_1,\ldots, x_\ell)=\sum_{k=1}^\ell f_k(x_k)$, where
$f_1,\ldots,f_\ell$ are non-negative ascending functions.
Here, the constant function $1$ is a sequence-number function for $f$.
Indeed, for $i\in\Natural$, let $S_f(i)=\{\underline{x}\}$,
where the $k$-th coordinate of $\underline{x}$ is defined to be the largest integer
$y$ such that $f_k(y)\leq i$ (if such an integer $y$ exists,
otherwise, $S_f(i)$ is empty).
Hence, if $f(\underline{y})\le i$ then we deduce that
$f_k(\underline{y}_k)\le i$ as each of the functions $f_1,\ldots,f_\ell$ is
non-negative. Therefore, $\underline{x}$ dominates $\underline{y}$.
Consequently, $(S_f(i))_{i\in\Natural}$ is a set-sequence
for $f$, which is bounded since
\[
f(\underline{x})\le\sum_{k=1}^\ell
f_k(\underline{x}_k)\le \ell\cdot i,
\]
and $\ell$ does not depend on $i$ (the bounding constant $c$ is equal to
$\ell$ in this case).

As another example, consider the case where $f\colon \Natural ^2\to
\Reals$ is given by $f(x_1,x_2)= f_1(x_1)\cdot f_2(x_2)$,
where $f_1$ and $f_2$ are ascending functions taking values at least $1$.
Then, the function $s_f(i)=\lceil\log i\rceil+1$ is a sequence-number
for $f$.
Indeed, for $i\in\Natural$ let
$S_f(i)=\sst{(x_{1}^{j},x_{2}^{j})}{j\in\icc{0}{\lceil\log i\rceil}}$
where $x_{1}^{j}$ is the largest integer $y_1$ such that $f_{1}(y_1)\leq 2^j$ and
$x_{2}^{j}$ is the largest integer $y_2$ such that $f_{2}(y_2)\leq 2^{\log i-j+1}$ for each
$j\in\icc{0}{\lceil\log i\rceil}$ (if such integers $y_1$ and $y_2$ exist, otherwise
we do not define the pair $(x_{1}^{j},x_{2}^{j})$).
Again, a straightforward check ensures that $(S_f(i))_{i\in\Natural}$ is a
bounded set-sequence for $f$ with bounding constant $2$.
On the other hand, it is interesting to note  that not all functions have a bounded sequence-number function, as one can
see by considering the $\min$ function over $\mathbf{N}^2$.  The following observation summarizes to two aforementioned examples. \\

\begin{observ}\label{obs:sequence}
\begin{itemize}
\item
The constant function $1$ is a sequence-number function for any additive function.\\
\item
Let $f\colon \Natural ^2\to
\Reals$ be a function given by $f(x_1,x_2)= f_1(x_1)\cdot
f_2(x_2)$, where $f_1\ge 1$ and $f_2\ge 1$ are ascending  functions.
Then, the function $s_f(i)=\lceil\log i\rceil+1$ is a sequence-number function for $f$.
\end{itemize}
\end{observ}

We now give an explicit construction of a local algorithm $\pi$, which will be used
to prove the forthcoming theorems.

Consider a problem $\Prob$ and a family of instances $\cF$.
Assume that $\cP$ is a pruning algorithm for $\Prob$.
Let $\cA^\Gamma$ be a deterministic algorithm
for $\Prob$ and~$\cF$ depending on a set $\Gamma$ of parameters
$\para_1,\ldots,\para_\ell$.
In addition, fix an integer $c$ and
let $(S_i)_{i\in\Natural}$ be a family of (possibly empty) subsets
of $\Natural^{\ell}$.

The algorithm $\pi$ runs in iterations, each of which
can be seen as a uniform alternating algorithm that operates on the
configurations in $\cF$.

Fix $i\in\Natural$ and let us write
$S_i=\{\underline{x}^1,\ldots,\underline{x}^{J_i}\}$.
For every $j\in\icc{1}{J_i}$, consider the uniform algorithm
$\cA_{j,i}$ that consists of running $\cA^{\Gamma}$
with the vector of guesses $\underline{x}^j$ of $S_i$.
More precisely, the $k$-th coordinate of
$\underline{x}^j$ is used as a guess for $\para_k$ for $k\in\{1,\ldots,\ell\}$.
Now, we define $\cA_{j,i}'$ to be the algorithm $\cA_{j,i}$
restricted to $c\cdot 2^i$ rounds.

An iteration of $\pi$ consists of running the
uniform alternating algorithm for the sequence of uniform algorithms
$\{\cA'_{j,i}\}_{j\in\icc{1}{J_i}}$ and the pruning algorithm $\cP$.
A pseudocode description of Algorithm $\pi$ is given by
Algorithm~\ref{alg:pi}.

\begin{algorithm} 
\Begin{
$(S_f(i))_{i\in\Natural}\longleftarrow$ bounded set-sequence for $f$ corresponding to $s_f$\;
$c\longleftarrow$ bounding constant of $(S_f(i))_{i\in\Natural}$\;
$(G_{1},\inp_{1})\longleftarrow(G,\inp)$\;
\For{$i$ from $1$ to $\infty$}{
$S_i\longleftarrow S_f(2^i)$\;
$J_i\longleftarrow\abs{S_i}$\;
$(G_{1,i},\inp_{1,i})\longleftarrow(G_i,\inp_i)$\;
  \For{$j$ from $1$ to $J_i$}{
  $\cA_{j,i}'\longleftarrow\cA^{\Gamma}$ restricted to $c\cdot 2^i$ rounds run with vector guesses
  $\underline{x}^j$ of $S_i$\;
  $\out_{j,i}\longleftarrow\cA_{j,i}'(G_{j,i},\inp_{j,i})$\;
  $(G_{j+1,i},\inp_{j+1,i})\longleftarrow\cP(G_{j,i},\inp_{j,i},\out_{j,i})$\;
  }
  $(G_{i+1},\inp_{i+1})\longleftarrow(G_{J_i+1,i},\inp_{J_{i}+1,i})$\;
  }
}

\caption{The algorithm $\pi$.}\label{alg:pi}
\end{algorithm}

We are now ready to state and prove Theorem~\ref{simple}, which deals with
deterministic local algorithms.

\subsection{The Deterministic Case}\label{sub:deterministic}

Theorem~\ref{simple}
considers a single set $\Gamma$ of non-decreasing parameters $\para_1,\ldots,\para_{\ell}$,
and assumes that (1) the given non-uniform algorithm $\cA^{\Gamma}$ depends on ${\Gamma}$
and (2) the running time of $\cA^\Gamma$ is evaluated according to the parameters in $\Gamma$.
Recall that in such a case, we say that a function $f\colon\Natural^{\ell}\to\Reals^+$
\emph{upper bounds} the running time
of $\cA^{\Gamma}$ with respect to $\Gamma$  if the running time
$T_{\cA^{\Gamma}}(G,\inp)$ of $\cA^{\Gamma}$  for every $(G,\inp)\in\cF$ using a
collection of good guesses $\tilde{\Gamma}=\{\tp_1,\ldots,\tp_\ell\}$
for $(G,\inp)$ is at most $f(\tp_1,\ldots,\tp_{\ell})$.

\begin{theorem}
\label{simple}
Consider a problem $\Prob$ and a family of instances $\cF$.
Let $\cA^\Gamma$ be a deterministic algorithm
for $\Prob$ and~$\cF$ depending on a set $\Gamma$ of non-decreasing parameters.
Suppose that the running time of $\cA^\Gamma$ is bounded from above
by some function $f\colon\Natural^\ell\to\Reals^+$ where $\ell=\card{\Gamma}$.
Assume that there exists a sequence-number function  $s_f$ for $f$, and
 a $\Gamma$-monotone pruning algorithm $\cP$ for $\Prob$ and $\cF$.
Then there exists a \emph{uniform} deterministic  algorithm  for $\Prob$ and $\cF$
whose running time
is $\bigo{f^*\cdot s_{f}(  f^*)}$, where $f^*=f(\Gamma ^*)$.
\end{theorem}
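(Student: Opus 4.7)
The plan is to analyze Algorithm~$\pi$ (Algorithm~\ref{alg:pi}) as the candidate uniform algorithm. Fix an instance $(G,\inp)\in\cF$ with correct parameters $\Gamma^*=\Gamma^*(G,\inp)$ and write $f^*=f(\Gamma^*)$. Let $i^*$ be the smallest integer with $f^*\leq 2^{i^*}$, so $2^{i^*}\leq 2f^*$. By the definition of a set-sequence, $S_i=S_f(2^{i^*})$ contains a vector $\underline{x}^{j^*}$ dominating $\Gamma^*$. My first step is to track what happens in iteration $i^*$ at inner index $j^*$: by $\Gamma$-monotonicity of $\cP$ and the fact that subgraphs preserve the non-decreasing graph-parameters, the parameters of every intermediate configuration $(G_{j,i},\inp_{j,i})$ (for any $j$ and any $i$) are dominated by $\Gamma^*$. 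In particular, when $\cA_{j^*,i^*}'$ is launched on $(G_{j^*,i^*},\inp_{j^*,i^*})$ with guess vector $\underline{x}^{j^*}$, the guesses are good.

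The second step is to verify termination. Since the set-sequence is bounded with constant $c$, we have $f(\underline{x}^{j^*})\leq c\cdot 2^{i^*}$, which equals the number of rounds allocated to $\cA_{j,i}'$. Hence $\cA^\Gamma$ with guesses $\underline{x}^{j^*}$ completes within its budget on $(G_{j^*,i^*},\inp_{j^*,i^*})$ and returns a correct output vector, i.e., $(G_{j^*,i^*},\inp_{j^*,i^*},\out_{j^*,i^*})\in \Prob$. The solution detection property of $\cP$ then prunes the whole remaining graph, and Observation~\ref{sequence} (applied to the alternating algorithm obtained by concatenating all iterations) guarantees that the combined output is a correct solution for $(G,\inp)$.

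The third step is to bound the running time. Iteration $i$ consists of $J_i=\abs{S_i}\leq s_f(2^i)$ phases, each of length at most $c\cdot 2^i$ for the execution of $\cA_{j,i}'$ plus the constant-time pruning phase, so iteration $i$ runs in $\bigo{s_f(2^i)\cdot 2^i}$ rounds. Since the algorithm halts by the end of iteration $i^*$, the total running time is at most
\[
\sum_{i=1}^{i^*}\bigo{s_f(2^i)\cdot 2^i}
\leq \bigo{s_f(2^{i^*})}\cdot \sum_{i=1}^{i^*} 2^i
= \bigo{s_f(2^{i^*})\cdot 2^{i^*}},
\]
using that $s_f$ is non-decreasing. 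Because $s_f$ is moderately-slow and $2^{i^*}\leq 2f^*$, we get $s_f(2^{i^*})=\bigo{s_f(f^*)}$, which yields the desired bound $\bigo{f^*\cdot s_f(f^*)}$.

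The main obstacle I anticipate is ensuring that the guesses used in iteration $i^*$ remain good after all the intermediate prunings of earlier iterations (and after the earlier phases of iteration $i^*$ itself). This is exactly the role of $\Gamma$-monotonicity: it guarantees that the parameter values never grow as nodes get pruned, so a vector that dominates $\Gamma^*(G)$ also dominates $\Gamma^*$ of every subgraph that ever arises. A secondary subtlety is that within a single iteration, we must handle the possibility that some earlier $\underline{x}^{j}$ (with $j<j^*$) causes $\cA^\Gamma$ to produce garbage because the guesses are not good for the current subgraph; but the $\Gamma$-monotone pruning algorithm absorbs any such behavior correctly, because the gluing property is vacuously compatible with arbitrary outputs on the pruned set as long as the final alternating algorithm eventually terminates with $\emptyset$, which it does at iteration $i^*$.
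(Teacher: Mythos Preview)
Your proposal is correct and follows essentially the same approach as the paper: analyze Algorithm~$\pi$, use $\Gamma$-monotonicity to ensure the guesses at the critical sub-iteration $j^*$ of iteration $i^*$ (the paper calls it $s$) remain good, invoke the bounded set-sequence to show $\cA^\Gamma$ finishes within its $c\cdot 2^{i^*}$-round budget, apply solution detection plus Observation~\ref{sequence} for correctness, and sum the geometric series using that $s_f$ is non-decreasing and moderately-slow for the running-time bound. Aside from a minor typo (you write $S_i=S_f(2^{i^*})$ where you mean $S_{i^*}$), the argument matches the paper's proof step for step.
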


\begin{proof}
Let $\para_1,\ldots,\para_{\ell}$ be the parameters in $\Gamma$.
Fix a bounded set-sequence $(S_f(i))_{i\in\Natural}$ for $f$ corresponding to $s_f$
and let $c$ be the bounding constant of $(S_f(i))_{i\in\Natural}$.
Set $S_i= S_f(2^i)$ and $J_i=\abs{S_i}$, hence
$J_i\le s_f(2^i)$.

The desired uniform algorithm is the algorithm $\pi$ (Algorithm~\ref{alg:pi}).
We shall prove that $\pi$ is correct and runs in time $\bigo{s_f(2^m)\cdot 2^m}$
over every configuration in $\cF$, where $m=\lceil \log f^* \rceil$.

Fix $i\in\Natural$ and let us write
$S_i=\{\underline{x}^1,\ldots,\underline{x}^{J_i}\}$.
Each iteration of the inner loop of $\pi$ is called
\emph{Sub-iteration}, while \emph{Iteration} is reserved for iterations of the
outer loop. As written in the pseudocode description of $\pi$ given
by Algorithm~\ref{alg:pi},
$(G_{j,i},\inp_{j,i})$ is the configuration over which $\pi$ operates
during Sub-iteration $j$ of Iteration $s$, for $j\in\icc{1}{J_i}$.

Let us prove that Algorithm $\pi$ is correct.
Fix a configuration $(G,\inp)$ and set $\para_r^*=\para_r(G,\inp)$ for $r\in\icc{1}{\ell}$.
We consider the operation of $\pi$ on $(G,\inp)$.
Setting $f^*= f(\para_1^*, \ldots, \para_\ell^*)$,
we know that $f^*$ is an upper bound on the running time of $\cA^{\Gamma}$ over $(G,\inp)$,
assuming that $\cA^{\Gamma}$ uses the vector $\Gamma^*$ of correct guesses
$\para^*_1,\ldots,\para^*_\ell$.
Let $s$ be the least integer such that
$f^*\le 2^s$.
By the definition, there exists $j^*\in\icc{1}{J_s}$,
such that $\underline{x}^{j^*}$ dominates
$(\para_{1}^*,\ldots,\para_{\ell}^*)$.

The monotonicity property of $\cP$ implies that\break
$\para_r(G_{j-1,i},\inp_{j-1,i})\ge
\para_r(G_{j,i},\inp_{j,i})$ for every $r\in\icc{1}{\ell}$.
Thus, we infer by induction on $k$ that $\para^*_r=\para_r(G,\inp)\ge
\para_r(G_{j,i},\inp_{j,i})$ for every $i\in\mathbf{N}$,
$j\in\icc{1}{J_i}$ and $r\in\icc{1}{\ell}$.

Now, let us  consider Iteration $s$ of $\pi$.
Assume that some nodes are still
active during Iteration $s$ of $\pi$, that is, $V(G_{s})$ is not empty.
Iteration $s$ of $\pi$  is composed
of $J_s$ sub-iterations. During Sub-iteration $j$,
the algorithm  $\cA'_{j,s};\cP$ is executed over
$(G_{s}^{j},\inp_{s}^{j})$.
We know  that $\para_r^*\ge\para_r(G_{j,s},\inp_{j,s})$ for every $j\in[1,J_s]$, and every
$r\in\icc{1}{\ell}$. So, in Sub-iteration $j^*$ of Iteration $s$,
we have $x_{j^*,r}\ge \para_r^*\ge\para_r(G_{j^*,s},\inp_{j^*,s})$ for every
$r\in\icc{1}{\ell}$.

Sub-iteration $j^*$ consists of first running Algorithm
$\cA'_{j^*,s}$, which amounts to running $\cA^\Gamma$ for $c\cdot 2^s$ rounds
using the vector of guesses $\underline{x}^{j^*}$
By the definition of $S_f(2^s)$, it follows that
$f(\underline{x}^{j^*})\leq c \cdot 2^s$. Hence,
this execution of Algorithm $\cA^\Gamma$ is actually completed by time $c\cdot 2^s$.
Furthermore, since $\underline{x}^{j^*}$ dominates
$(\para_1(G_{j^*,s},\inp_{j^*,s}),\ldots,\break\para_\ell(G_{j^*,s},\inp_{j^*,s}))$,
the vector of guesses used by Algorithm $\cA^\Gamma$ is good, and hence the algorithm outputs a vector
$\out_{s}^{j^*}$ satisfying  $(G_{j^*,s},\inp_{j^*,s},\out_{j^*,s})\in \Prob$.
By the solution detection property, the subsequent pruning algorithm (still in Sub-iteration $j^*$
of Iteration $s$)
selects $W_{j^*,s}=V(G_{j^*,s})$. By Observation~\ref{sequence}, it follows that $\pi$ is correct.

It remains to prove that the running time is $\bigo{s_f(f^*)\cdot f^*}$.
Let $T_0$ be the running time of $\cP$.
Observe that Iteration $i$ of $\pi$ takes at most $J_i(c\cdot2^i+T_0)$ rounds,
which is $\bigo{s_f(2^i)\cdot 2^i}$ rounds.
Since $\pi$ consists of at most $s$ iterations,
the running time of $\pi$ is bounded by
$\sum_{i=1}^{s}s_f(2^i)\cdot2^i$, which is $\bigo{s_f(2^s)\cdot2^s}$
because $s_f$ is non-decreasing. Moreover,
\[
\bigo{s_f(2^s)\cdot2^s}=\bigo{s_f(2\cdot f^*)\cdot2^s}=\bigo{s_f(f^*)\cdot f^*}
\]
since $2^{s-1}<f^*\le2^s$ and $s_f$ is moderately-slow (hence, in particular,
non-decreasing).
Therefore, the running time of $\pi$ is bounded by $\bigo{s_f(f^*)\cdot f^*}$.
\qed
\end{proof}

By Observation~\ref{obs:sequence}, the constant function $s_f=1$ is a
sequence number function for any additive function $f$.
Hence, Corollary~\ref{cor:main}\ref{thm:mm} follows directly by applying
Theorem~\ref{simple} to the maximal matching algorithm of
Hanckowiak \emph{et al.}~\cite{HKP01}, and using Observation~\ref{claim:universal-mm}.

In addition, using Observation~\ref{claim:universal-ruling},
Theorem~\ref{simple} allows us
to transform each of the  MIS algorithms in~\cite{BaEl09,Kuh09,PaSr96}
into a uniform one with asymptotically the same time complexity.
We thus obtain the following corollary.\\

\begin{corollary}\label{cor:MIS} Consider the family $\cF$ of all graphs.
\begin{itemize}
\item There exists a uniform deterministic  $\MIS$ algorithm  for  $\cF$ running  in time $\bigo{\Delta+\log^*\! n}$.\\
\item
There exists a uniform deterministic $\MIS$ algorithm for $\cF$ running in time $2^{\bigo{\sqrt{\log n}}}$.
\end{itemize}
\end{corollary}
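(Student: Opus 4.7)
The plan is to invoke Theorem~\ref{simple} twice, once for each item, feeding it a suitable non-uniform MIS algorithm, the $\MIS$ pruning algorithm $\cP_{(2,1)}$ from Observation~\ref{claim:universal-ruling}, and an appropriate sequence-number function. Both invocations are essentially packagings, so the main work is verifying that the hypotheses of Theorem~\ref{simple} are met.

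For the first item, I would take $\cA^{\Gamma}$ to be the Barenboim--Elkin~\cite{BaEl09} (or Kuhn~\cite{Kuh09}) non-uniform MIS algorithm, with $\Gamma=\{n,\Delta\}$ and running-time bound $f(\tilde n,\tilde\Delta)=\bigo{\tilde\Delta+\log^*\tilde n}$. Since $n$ and $\Delta$ are non-decreasing graph-parameters, the pruning algorithm $\cP_{(2,1)}$ is $\Gamma$-monotone by Observation~\ref{claim:universal-ruling} (or, more generally, the first part of Observation~\ref{claim:mono}). Because $f$ is additive (it is the sum of two non-negative ascending functions, one per parameter), Observation~\ref{obs:sequence} guarantees that the constant function $s_f\equiv 1$ is a sequence-number function for $f$. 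Theorem~\ref{simple} then yields a uniform deterministic MIS algorithm whose running time is
\[
\bigo{f^{*}\cdot s_f(f^{*})}=\bigo{f(n,\Delta)}=\bigo{\Delta+\log^{*}n},
\]
as desired.

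For the second item, I would instead apply Theorem~\ref{simple} to the Panconesi--Srinivasan~\cite{PaSr96} algorithm, taking $\Gamma=\{n\}$ and $f(\tilde n)=2^{\bigo{\sqrt{\log\tilde n}}}$. Again, $\cP_{(2,1)}$ is $\Gamma$-monotone since $n$ is a non-decreasing graph-parameter. As $f$ is a function of a single ascending variable, it is additive in the sense of Section~\ref{sub:genframe}, and Observation~\ref{obs:sequence} again gives $s_f\equiv 1$ as a valid sequence-number function. Theorem~\ref{simple} then produces a uniform deterministic MIS algorithm whose running time is $\bigo{f^{*}\cdot 1}=2^{\bigo{\sqrt{\log n}}}$.

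There is no real obstacle here: the only verifications are (a) that the MIS pruning algorithm is monotone with respect to $\{n\}$ and $\{n,\Delta\}$, which is immediate from the corresponding observations, and (b) that the two running-time bounds are additive functions of their parameters, so that Observation~\ref{obs:sequence} supplies a constant sequence-number function. Once these are noted, both items are direct applications of Theorem~\ref{simple}.
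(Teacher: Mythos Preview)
Your proposal is correct and follows exactly the approach the paper takes: the corollary is obtained by applying Theorem~\ref{simple} to the non-uniform MIS algorithms of~\cite{BaEl09,Kuh09} and~\cite{PaSr96}, using the $\MIS$ pruning algorithm of Observation~\ref{claim:universal-ruling} and the fact from Observation~\ref{obs:sequence} that additive running-time bounds admit the constant sequence-number function $s_f\equiv 1$. The paper itself presents this as an immediate consequence of Theorem~\ref{simple} without additional detail, so your verification of the hypotheses is already more explicit than the original.
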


Recall that Barenboim and Elkin~\cite{BaEl10a} devised, for every $\delta>0$, a (non-uniform) deterministic $\MIS$ algorithm for the family of all graphs running in time $f(a,n)=\bigo{a+a^{\delta}\log n}$.
Fix $\epsilon\in(0,1)$ and consider the family $F_{\text{large}}$ of graphs with arboricity $a>\log^{1+\epsilon/2} n$. It follows from~\cite{BaEl10a} (applied with, e.g., $\delta=\epsilon/3$), that there exists a (non-uniform) deterministic $\MIS$ algorithm for
$F_{\text{large}}$ running in time $\bigo{a}$. Hence, using Observation~\ref{claim:universal-ruling} and
Theorem~\ref{simple}, we obtain a uniform deterministic $\MIS$ algorithm  for
$F_{\text{large}}$ running in $\bigo{a}$ time.

Next, let $F_{\text{med}}$ be  the family of graphs with arboricity $a$ such that $\log^{1/3} n< a\leq \log^{1+\epsilon/2} n$.
Since $a \leq \log^{1+\epsilon/2}n$, it follows that $a^{1-\epsilon/2} < \log n$, and hence, $a<a^{\epsilon/2}\log n$.
By~\cite{BaEl10a}, applied with $\delta=\epsilon/2$, there exists a deterministic $\MIS$ algorithm for
$F_{\text{med}}$ running in time $f_{\text{med}}=\bigo{a^{\epsilon/2}\log n}$.
Note that by Observation~\ref{obs:sequence}, the sequence number for
$f_{\text{med}}$ is $s_{f_{\text{med}}}(f_{\text{med}})=\bigo{\log
f_{\text{med}}}=\bigo{\log\log n}$.
Hence, by combining Observation~\ref{claim:universal-ruling} and
Theorem~\ref{simple}, we obtain a uniform $\MIS$ algorithm for $F_{\text{med}}$
running in time $\bigo{a^{\epsilon/2}\log n\log\log n}=\bigo{a^{\epsilon}\log n}$.
(This last equality follows from the fact that $\log^{1/3} n< a$.)\footnote{In fact,
we could have used  in the definition of $F_{\text{med}}$ any small constant instead of $1/3$, but
$1/3$ is sufficiently good for our purposes as, anyway, this result will
be combined with better results for $a=\lo{\sqrt{\log n}}$,
which shall be established later on, in Corollary \ref{cor:arboricity}.}
Summarizing the above discussion, we obtain the following.

\begin{corollary}\label{cor:MIS-arb} For every $\epsilon>0$, there exists the following uniform deterministic $\MIS$ algorithm:
\begin{itemize}
\item
For the family $F_{\text{large}}$, running in $O(a)$ time,\\
\item
For the family $F_{\text{med}}$, running in $\bigo{a^{\epsilon}\log n}$ time.
\end{itemize}
\end{corollary}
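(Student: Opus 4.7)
The plan is to apply Theorem~\ref{simple} twice to the non-uniform Barenboim--Elkin MIS algorithm from~\cite{BaEl10a}, which for every constant $\delta>0$ has running time $f(n,a)=\bigo{a+a^\delta\log n}$. In both cases the pruning algorithm $\cP_{(2,1)}$ from Observation~\ref{claim:universal-ruling} provides the required $\Gamma$-monotone pruning step, as $\cP_{(2,1)}$ is monotone with respect to any non-decreasing parameter and thus with respect to $\Gamma=\{n,a\}$. The two families differ only in how the running time simplifies and in which sequence-number function supplied by Observation~\ref{obs:sequence} is used.

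For the family $F_{\text{large}}$, I would instantiate BE with $\delta=\epsilon/3$. The hypothesis $a>\log^{1+\epsilon/2}n$ yields $\log n<a^{1/(1+\epsilon/2)}$, and a short exponent computation (valid for $\epsilon\in\ioo{0}{1}$, which covers the interesting range; larger values follow by taking a smaller $\epsilon$) shows that $\epsilon/3+1/(1+\epsilon/2)<1$, so $a^{\epsilon/3}\log n=\lo{a}$. The BE running time on $F_{\text{large}}$ thus simplifies to $\bigo{a}$, a function of a single ascending parameter. By the first item of Observation~\ref{obs:sequence}, the constant function $1$ is a sequence-number function, and Theorem~\ref{simple} yields a uniform algorithm of running time $\bigo{f^{*}\cdot s_f(f^{*})}=\bigo{a}$.

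For the family $F_{\text{med}}$, I would instantiate BE with $\delta=\epsilon/2$. The hypothesis $a\leq\log^{1+\epsilon/2}n$ gives $a^{1-\epsilon/2}\leq\log n$, hence $a\leq a^{\epsilon/2}\log n$, so the BE running time on $F_{\text{med}}$ simplifies to $f_{\text{med}}=\bigo{a^{\epsilon/2}\log n}$, a product of two ascending functions of separate parameters. By the second item of Observation~\ref{obs:sequence}, one may take $s_{f_{\text{med}}}(i)=\bigo{\log i}$, and Theorem~\ref{simple} produces a uniform algorithm running in time $\bigo{f_{\text{med}}\cdot s_{f_{\text{med}}}(f_{\text{med}})}=\bigo{a^{\epsilon/2}\log n\cdot\log(a^{\epsilon/2}\log n)}$. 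The last step will be to use the two arboricity bounds defining $F_{\text{med}}$ to simplify: $a\leq\log^{1+\epsilon/2}n$ gives $\log(a^{\epsilon/2}\log n)=\bigo{\log\log n}$, while $a>\log^{1/3}n$ gives $a^{\epsilon/2}>\log^{\epsilon/6}n$, which dominates $\log\log n$. Absorbing $\log\log n$ into $a^{\epsilon/2}$ then converts the bound into $\bigo{a^{\epsilon}\log n}$.

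The argument is essentially a careful bookkeeping exercise: I expect no conceptual obstacle, but the two $\delta$'s, the sequence-number functions, and the arboricity inequalities all have to be calibrated so that the exponents line up to produce $\bigo{a}$ on $F_{\text{large}}$ and $\bigo{a^{\epsilon}\log n}$ on $F_{\text{med}}$. The only subtle point is verifying that the slack between the lower bound $a>\log^{1/3}n$ and the exponent $\epsilon/2$ in $F_{\text{med}}$ is sufficient to absorb the additional $\log\log n$ factor introduced by the sequence-number function, which follows because $\log^{\epsilon/6}n=\omega(\log\log n)$ for any fixed $\epsilon>0$.
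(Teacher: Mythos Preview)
Your proposal is correct and follows essentially the same approach as the paper: the same Barenboim--Elkin algorithm, the same choices $\delta=\epsilon/3$ and $\delta=\epsilon/2$ for the two families, the same pruning algorithm from Observation~\ref{claim:universal-ruling}, and the same invocations of Observation~\ref{obs:sequence} and Theorem~\ref{simple}. Your exponent calculations (e.g., that $\epsilon/3+1/(1+\epsilon/2)<1$ and that $\log^{\epsilon/6}n$ absorbs the extra $\log\log n$) just make explicit the arithmetic the paper leaves to the reader.
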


\subsection{The Randomized Case}\label{sub:randomized}

We now show how to extend Theorem~\ref{simple} to the rando\-mi\-zed setting. More
specifically, we replace the given non-uniform deterministic algorithm
of Theorem~\ref{simple} by a non-uniform weak Monte-Carlo algorithm $\cA^\Gamma$
and produce a uniform Las Vegas one.
This transformer is more sophisticated than the one given in  Theorem~\ref{simple},
and requires the use of sub-iterations for bounding the expected running time and probability of success of the resulting Las-Vegas algorithm.

\begin{theorem}
\label{thm:randomized}
Consider a problem $\Prob$ and a family of instances $\cF$.
Let $\cA^\Gamma$ be a weak Monte-Carlo algorithm
for $\Prob$ and~$\cF$ depending on a set
$\Gamma$ of non-decreasing parameters. Suppose that the running time of
$\cA^\Gamma$ is bounded from above
by some function $f\colon\Natural^\ell\to\Reals^+$, where $\ell=\card{\Gamma}$.
Assume that there exists a sequence-number function $s_f$ for $f$, and
a $\Gamma$-monotone pruning algorithm $\cP$ for $\Prob$ and $\cF$.
Then there exists a \emph{uniform} Las Vegas algorithm  for $\Prob$ and $\cF$
whose expected running time
is $\bigo{f^*\cdot s_{f}(  f^*)}$, where $f^*=f(\Gamma ^*)$.
\end{theorem}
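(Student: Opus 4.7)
The plan is to extend the construction of $\pi$ (Algorithm~\ref{alg:pi}) by, inside each sub-iteration, running the pair $(\cA'_{j,i};\cP)$ not once but $R$ times in sequence, for a constant $R$ to be fixed at the end of the analysis. Each of these $R$ executions of the weak Monte-Carlo algorithm $\cA^\Gamma$ will use fresh, independent random bits. Call the resulting uniform algorithm $\pi'$. The rationale is that in any sub-iteration whose guess vector dominates the correct parameters of the configuration being processed, each individual run of $\cA^\Gamma$ outputs a valid solution (and hence, by the solution detection property of $\cP$, triggers full pruning of the remaining graph) with probability at least $\rho$, so the probability that none of the $R$ runs succeeds is at most $(1-\rho)^R$.

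Correctness of $\pi'$ as a Las Vegas algorithm will follow immediately from the gluing property of $\cP$ exactly as in Observation~\ref{sequence}: whatever outputs the weak Monte-Carlo algorithm happens to produce, whenever the surviving graph becomes empty the combined output vector is a valid solution for $(G,\inp)$. It then suffices to show that $\pi'$ terminates with probability~$1$ and to bound its expected running time.

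For the running time analysis I will fix $(G,\inp)\in\cF$ and set $f^*=f(\Gamma^*)$ and $s=\lceil\log f^*\rceil$. Exactly as in the proof of Theorem~\ref{simple}, the $\Gamma$-monotonicity of $\cP$ ensures that for every $i\ge s$ there is a sub-iteration $j^*$ whose guess vector $\underline{x}^{j^*}\in S_f(2^i)$ dominates the correct parameters of the current configuration and satisfies $f(\underline{x}^{j^*})\le c\cdot 2^i$. Hence every run of $\cA^\Gamma$ in this sub-iteration completes within its budget of $c\cdot 2^i$ rounds and, by the weak Monte-Carlo guarantee, succeeds with probability at least $\rho$; conditionally on $\pi'$ not having terminated before iteration $i$, the probability that $\pi'$ fails to terminate \emph{during} iteration $i$ is therefore at most $(1-\rho)^R$. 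Writing $\alpha$ for the moderately-slow constant of $s_f$, iteration $s+k$ runs in at most $\bigo{R\cdot s_f(2^{s+k})\cdot 2^{s+k}}\le \bigo{R\cdot (2\alpha)^k\cdot s_f(2^s)\cdot 2^s}$ rounds, and the expected total running time is thus bounded by
\[
 \bigo{s_f(2^s)\cdot 2^s}+\sum_{k\ge 0}(1-\rho)^{Rk}\cdot\bigo{R\cdot(2\alpha)^k\cdot s_f(2^s)\cdot 2^s}.
\]
Choosing $R$ large enough that $2\alpha(1-\rho)^R<1$, which is possible since $\alpha$ and $\rho$ are absolute constants, collapses the geometric series to $\bigo{s_f(2^s)\cdot 2^s}=\bigo{f^*\cdot s_f(f^*)}$; termination with probability $1$ comes from $(1-\rho)^{R(k+1)}\to 0$ as $k\to\infty$.

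The main obstacle will be the tension between keeping $R$ a constant and controlling the super-linear growth of per-iteration time past iteration $s$: both the moderately-slow constant $\alpha$ and the doubling of the time budget $2^{s+k}$ contribute, and so the convergence of the geometric series really hinges on the hypothesis that $s_f$ is moderately-slow (allowing the bound $s_f(2^{s+k})\le\alpha^k s_f(2^s)$). A minor secondary obstacle, already present in the deterministic setting of Theorem~\ref{simple}, is that a run of the weak Monte-Carlo algorithm need not terminate within its budget at all; this I handle, as there, by forcing termination at $c\cdot 2^i$ rounds with an arbitrary output at each unfinished node and letting the pruning algorithm simply leave those nodes unpruned.
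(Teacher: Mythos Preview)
Your argument is correct. It differs from the paper's proof in an instructive way.

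The paper does not repeat each sub-iteration a constant number $R$ of times. Instead it builds a new algorithm $\tau$ whose iteration $i$ re-runs \emph{all} of the first $i$ iterations of $\pi$ (so iteration $i$ of $\tau$ contains sub-iterations with guess sets $S_f(2^1),\ldots,S_f(2^i)$). Once $i\ge s$, iteration $i$ of $\tau$ contains $i-s+1$ sub-iterations whose guess vectors dominate the correct parameters; hence the conditional failure probability of iteration $s+k-1$ is at most $(1-\rho)^{k}$, and the probability of reaching iteration $s+i$ is at most $\prod_{k=1}^{i}(1-\rho)^{k}=(1-\rho)^{i(i+1)/2}$. This super-geometric decay beats the growth $K^i$ of the per-iteration cost for \emph{any} constant $K$, so the paper's construction needs no tuning: the same $\tau$ works regardless of the moderately-slow constant $\alpha$ of $s_f$ and regardless of the guarantee $\rho$.

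Your construction, by contrast, fixes $R$ in advance so that $2\alpha(1-\rho)^R<1$ and then sums a genuine geometric series. This is perfectly valid---$\alpha$ and $\rho$ are constants given with $s_f$ and $\cA^\Gamma$, so $R$ can be hard-wired into $\pi'$---and it is arguably more transparent: the algorithm is a small modification of the deterministic $\pi$, and the analysis is a direct expectation bound with a geometric tail. What you lose is the ``universality'' of the paper's $\tau$, which is insensitive to the specific constants; what you gain is a simpler algorithm and a shorter proof.
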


\begin{proof}
Let $\para_1,\ldots,\para_\ell$ be the parameters in $\Gamma$.
Let $T_0$ be the running time of the pruning algorithm  $\cP$,
and let $\cA^{\Gamma}$ be the given weak Monte-Carlo algorithm.
To simplify the notations, we assume that the success guarantee $\rho$ of $\cA^{\Gamma}$
 is $1/2$.

\begin{algorithm}
\Begin{
$(S_f(i))_{i\in\Natural}\longleftarrow$ bounded set-sequence for $f$ corresponding to $s_f$\;
$c\longleftarrow$ bounding constant of $(S_f(i))_{i\in\Natural}$\;
$(G_1,\inp_1)\longleftarrow(G,\inp)$\;
\For{$i$ from $1$ to $\infty$}{
    \For{$j$ from $1$ to $i$}{
        $S_j\longleftarrow S_f(2^j)$\;
        $J_j\longleftarrow\abs{S_j}$\;
        $(G_{1,j},\inp_{1,j})\longleftarrow(G_i,\inp_i)$\;
        \For{$k$ from $1$ to $J_j$}{
            $\cA_{k,j}'\longleftarrow\cA^{\Gamma}$ restricted to $c\cdot 2^j$ rounds run with vector
                guesses $\underline{x}^k$ of $S_j$\;
            $\out_{k,j}\longleftarrow A_{k,j}'(G_{k,j},\inp_{k,j})$\;
            $(G_{k+1,j},\inp_{k+1,j})\longleftarrow\cP(G_{k,j},\inp_{k,j},\out_{k,j})$\;
        }
        $(G_{j+1},\inp_{j+1})\longleftarrow(G_{J_j+1,j},\inp_{J_{j}+1,j})$\;
    }
}
}

\caption{The algorithm $\tau$ in the proof of
Theorem~\ref{thm:randomized}.}\label{alg:tau}
\end{algorithm}

The desired uniform algorithm $\tau$ runs in iterations, where Iteration $i$
consists of running the first $i$ iterations of the algorithm $\pi$ defined
in Subsection~\ref{sub:genframe}.
A pseudocode description of Algorithm
$\tau$ is given by Algorithm~\ref{alg:tau}. Similarly as in the proof
of Theorem~\ref{simple}, the word ``Iteration'' is reserved for the iterations
of the outer loop of $\tau$, while ``Sub-iteration'' is used for the iterations
of the middle loop of $\tau$.

For each positive integer $i$, let $\beta_i$ be the number of rounds used in
Iteration $i$ of $\tau$.
Analogously to the proof of Theorem~\ref{simple}, we infer that
$\beta_i=\bigo{s_f(2^i)\cdot 2^i}$.
Let $\alpha_i$ be the number of rounds used during the first $i$ iterations of
$\tau$. We thus have $\alpha_i=\sum_{k=1}^{i}\beta_k$, which is
$\bigo{s_f(2^i)\cdot 2^i}$.

It follows using similar arguments to the ones given in the proof of Theorem~\ref{simple}, that
if~$\tau$ outputs, then the output vector $\out$ is a solution, i.e. $(G,\inp,\out)\in\Prob$.

It remains to bound the running time of $\tau$.
We consider the random variable $\run_{\tau}(G,x)$ that stands for ``the
running time of $\tau$ on $(G,\inp)$''.
For every integer $i$, let $\rho_i$ be the probability that $V(G_{i})\neq \emptyset$ and
$V(G_{i+1})=\emptyset$, that is, $\rho_i$ is the probability that the last active node
becomes inactive precisely during Iteration $i$ of $\tau$. In other words,
\[
\rho_i=\pr\left(\run_{\tau}(G,x)\in\icc{\alpha_{i-1}+1}{\alpha_{i}}\right).
\]

Setting $f^*= f(\para_1^*, \ldots, \para_\ell^*)$, we know that $f^*$ is an upper
bound on the running time of $\cA^{\Gamma}$ over $(G,\inp)$, assuming that
$\cA^{\Gamma}$ uses the collection $\Gamma^*$ of correct guesses
$\para^*_1,\ldots,\para^*_\ell$.
Consider the smallest integer $s$ such that
$f^*\le 2^s$.

Since $s_f$ is moderately-slow, there is a constant $K$ such that
$\alpha_{i+1}\le K\cdot\alpha_i$ for every positive integer $i$.
In particular, $\alpha_{s+i}\le K^{i}\cdot\alpha_s$, and hence
\begin{align*}
\Ee(\run_{\tau}(G,x))&\le
\alpha_s\cdot\pr\left(\run_{\tau}(G,\inp)\le\alpha_s\right)+\sum_{i=1}^{\infty}\alpha_{s+i}\cdot
\rho_{s+i}\\
&\le\alpha_s+\alpha_s\sum_{i=1}^{\infty}K^{i}\cdot
\rho_{s+i}.
\end{align*}
Our next goal is to bound $\rho_{s+i}$ from above. For a positive integer $r$,
let $\chi_r$ be the event that $V(G_{r+1})\neq\emptyset$,
that is, none of $\cC_1,\ldots,\cC_r$ output the empty configuration and thus,
there is still an active node at the beginning Iteration $r+1$ of $\tau$.
Thus,
$\rho_{s+i}\le  \pr(\chi_{s+i-1})$.

Recall that we assume that the success guarantee  of
$\cA^{\Gamma}$ is $1/2$. Therefore, using similar analysis as in the proof of Theorem \ref{simple}, it follows that
for every positive integer $k$, the probability that an application of $\cB_{s+k-1}$ (in particular, during iteration $s+i-1$) does not output the empty configuration is at most $1/2$.
As a result,
\[
\rho_{s+i}\le\pr(\chi_{s+i-1})\le\prod_{j=1}^{i}2^{-j}=2^{- (i^2+i)/2}.
\]
Therefore,
\begin{align*}
\Ee(\run_{\tau}(G,x))&\le\alpha_s\left(1+\sum_{i=1}^{\infty}K^i\cdot2^{-(i^2+i)/2}\right)\\
&=\bigo{\alpha_s}
=\bigo{f^*\cdot s_f(f^*)}.
\end{align*}
\qed
\end{proof}

Corollary~\ref{cor:main}\ref{thm:rs} follows by applying Theorem~\ref{thm:randomized}
to the ruling set algorithm of Schneider and Wattenhofer~\cite{ScWa10a},
and using the pruning algorithm given by Observation~\ref{claim:universal-ruling}.

\subsection{The General Theorem}\label{sub:general}
Some complications arise when the correctness of the given non-uniform
algorithm relies on the use of a set of parameters $\Gamma$ while its running time is evaluated
with respect to another set of parameters $\Lambda$.
For example, it may be the case that an upper bound on a parameter $\para$ is required for the correct operation of an algorithm, yet the running time of the algorithm does not depend on $\para$. In this case, it may not be clear how to choose the guesses for $\para$.
(This occurs, for example, in the MIS algorithms of Barenboim and Elkin~\cite{BaEl10b}, where the knowledge of $n$ and the arboricity $a$ are required, yet the running time $f$ is a function of $n$ only.)
Such complications can be solved when there is some relation between the parameters in $\Gamma$ and
those in $\Lambda$; specifically, when $\Gamma$ is weakly-dominated by $\Lambda$.
(The definition of weakly-dominated is given in Section~\ref{sec:preliminaries}.)
This issue is handled in the following theorem, which extends both Theorem~\ref{simple} and Theorem~\ref{thm:randomized}.

\begin{theorem}
\label{thm:parameters}
Consider a problem $\Prob$, a family of instances $\cF$ and two
sets of non-decreasing parameters $\Gamma$ and
$\Lambda$,  where $\Gamma$ is weakly-dominated by
$\Lambda$. Let
$\cA^\Gamma$ be a deterministic (respectively, weak Monte-Carlo) algorithm
 depending on $\Gamma$ whose running time  is upper  bounded by some function
 $f\colon\Natural^\ell\to\Reals^+$, where $\ell=\card{\Lambda}$.
Assume that there exists a  sequence-number function  $s_f$ for $f$, and
 a $\Lambda\cup \Gamma$-monotone pruning algorithm $\cP$ for $\Prob$ and $\cF$.
Then there exists a \emph{uniform} deterministic (resp., Las Vegas) algorithm  for $\Prob$ and $\cF$
whose running time on every configuration $(G,\inp)\in\cF$ is
$\bigo{f^*\cdot s_{f}(  f^*)}$, where $f^*= f(\Lambda^*(G,\inp))$.
\end{theorem}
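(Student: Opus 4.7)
The plan is to reuse Algorithm $\pi$ for the deterministic claim, or Algorithm $\tau$ for the randomized claim, with one crucial modification: in each sub-iteration, I would translate the vector of guesses produced by the set-sequence $(S_f(2^i))_{i\in\Natural}$---which lives in $\Natural^\ell$ and assigns values only to the parameters in $\Lambda$---into a complete vector of guesses for $\Gamma$, using the weak-domination hypothesis. Once this translation is in place, correctness and running time follow essentially as in Theorems~\ref{simple} and~\ref{thm:randomized}.

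Concretely, write $\para_1,\ldots,\para_r$ for the parameters in $\Gamma$ and $\q_1,\ldots,\q_\ell$ for those in $\Lambda$, with $\para_k=\q_k$ for $k\in\icc{1}{r'}$. For each $k\in\icc{r'+1}{r}$, weak-domination provides an index $i_k\in\icc{1}{\ell}$ and an ascending function $g_k$ such that $g_k(\para_k(G,\inp))\le\q_{i_k}(G,\inp)$; since $g_k$ is non-decreasing and tends to infinity, its generalized inverse $g_k^{-1}(y)=\max\sst{x\in\Natural}{g_k(x)\le y}$ is well-defined for $y$ large enough (set it to $0$ otherwise). In the sub-iteration associated with a vector $\underline{x}\in S_f(2^i)$, I would invoke $\cA^\Gamma$ with the derived guess vector $\tilde\Gamma=(\tp_1,\ldots,\tp_r)$ defined by $\tp_k=\underline{x}_k$ for $k\le r'$ and $\tp_k=g_k^{-1}(\underline{x}_{i_k})$ for $k>r'$, restricted to $c\cdot 2^i$ rounds as before, followed by the pruning algorithm $\cP$.

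For correctness I would fix $(G,\inp)\in\cF$, set $\Lambda^*=\Lambda^*(G,\inp)$, $f^*=f(\Lambda^*)$, take the smallest integer $s$ with $f^*\le 2^s$, and pick $j^*$ such that $\underline{x}^{j^*}\in S_f(2^s)$ dominates $\Lambda^*$. The $\Lambda\cup\Gamma$-monotonicity of $\cP$ yields that, for the configuration $(G',\inp')$ reached at the beginning of sub-iteration $(s,j^*)$, every parameter in $\Lambda\cup\Gamma$ is bounded above by its value on $(G,\inp)$. Thus, for $k\le r'$, $\tp_k=\underline{x}^{j^*}_k\ge\q_k^*\ge\para_k(G',\inp')$, and for $k>r'$, weak-domination gives $g_k(\para_k(G',\inp'))\le\q_{i_k}(G',\inp')\le\q_{i_k}^*\le\underline{x}^{j^*}_{i_k}$, so $\para_k(G',\inp')\le g_k^{-1}(\underline{x}^{j^*}_{i_k})=\tp_k$. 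Hence the derived guess $\tilde\Gamma$ is good on $(G',\inp')$, and since $f(\underline{x}^{j^*})\le c\cdot 2^s$ the execution of $\cA^\Gamma$ completes within its $c\cdot 2^s$-round budget and returns a correct output; the solution-detection property of $\cP$ then prunes all remaining nodes. In the weak Monte-Carlo case, this happens with probability at least the success guarantee, and the bound $\rho_{s+i}\le 2^{-(i^2+i)/2}$ established in the proof of Theorem~\ref{thm:randomized} carries through unchanged. The running-time analysis is then verbatim that of Theorems~\ref{simple} and~\ref{thm:randomized}, giving $\bigo{f^*\cdot s_f(f^*)}$.

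The main obstacle---and the reason the hypothesis asks for $\Lambda\cup\Gamma$-monotonicity rather than merely $\Gamma$- or $\Lambda$-monotonicity---is to keep the derived guesses for the parameters in $\Gamma\setminus\Lambda$ good throughout the sub-iterations that follow prunings: one needs the $\Lambda$-monotonicity to control the dominating parameters $\q_{i_k}$, after which the weak-domination inequality together with the non-decreasing nature of $g_k^{-1}$ transfers that control to the corresponding $\para_k\in\Gamma\setminus\Lambda$. Once this bookkeeping is settled, the rest of the argument is a direct reduction to the earlier two proofs.
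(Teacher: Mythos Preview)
Your proof is correct and rests on the same key idea as the paper's, namely using the generalized inverses $g_k^{-1}$ of the ascending functions supplied by weak-domination to convert a guess vector for $\Lambda$ into a guess vector for $\Gamma$. The packaging differs, however: the paper proceeds by a black-box reduction, defining an auxiliary function $f'\colon\Natural^{\ell+t}\to\Reals^+$ on the enlarged parameter set $\Lambda'=\Lambda\cup\Gamma$, showing that $s_f$ is also a sequence-number function for $f'$ (with set-sequence obtained by appending the coordinates $g_j^{-1}(x_{h(j)})$ to each vector in $S_f(i)$), and then invoking Theorems~\ref{simple} and~\ref{thm:randomized} verbatim. You instead open up Algorithm~$\pi$ (resp.~$\tau$) and modify it in place, re-verifying the critical sub-iteration argument directly. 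Your route is arguably more transparent and avoids the somewhat heavy definition of $f'$; the paper's route has the advantage of being a clean reduction that does not require revisiting the internals of the earlier proofs. Both arrive at exactly the same algorithm and the same bound.
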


\begin{proof}
First, we consider the case where $\Gamma\subseteq\Lambda$ and next
the general case.

Assume that
$\Lambda=\{\para_1,\ldots,\para_{\ell}\}$ and
$\Gamma=\{\para_1,\ldots,\para_r\}$, where $r\leq\ell$.
Then, let us simply impose that $A^{\Gamma}$ also requires estimates for the parameters
$\para_{r+1},\ldots,\para_{\ell}$, that is,
the operation of $A^{\Gamma}$ requires such estimates but actually ignores them after obtaining them.
This way, we obtain an algorithm
$A^{\Lambda}$ depending on $\Lambda$. Since $f$ is non-decreasing,
$f(\para^*_{1},\ldots,\para^*_{\ell})\leq
 f(\para^*_{1},\ldots,\para^*_{r},\tp_{r+1},\ldots,\tp_{\ell})$, where $\tp_i$ is a  good
guess for every $i\in\icc{r+1}{\ell}$. Hence, the running time of Algorithm
$A^{\Lambda}$ is also bounded by~$f$, so the conclusion follows by applying
Theorems~\ref{simple} and~\ref{thm:randomized}.

Now, let $\para_1,\ldots,\para_r$ and $\q_1,\ldots,\q_{\ell}$ be the parameters in
$\Gamma$ and $\Lambda$, respectively.
Recall that $r'\in\icc{0}{\min\{r,\ell\}}$ is such that
$\{\para_{r'+1},\para_{r'+2},\ldots, \para_r\}\cap
\{\q_{r'+1},\q_{r'+2},\ldots,\q_\ell\}=\emptyset$ and $\para_i=\q_i$ for every
$i\in\icc{1}{r'}$.
Set $t= r-r'$.
As $\Gamma$ is weakly-dominated by $\Lambda$,
there exists a function $h:\icc{1}{t}\to\icc{1}{\ell}$ and, for each $j\in\icc{1}{t}$,
an ascending function $g_j$ such that $g_j(\para_{r'+j}(G,\inp))\leq \q_{h(j)}(G,\inp)$ for
every configuration $(G,\inp)\in\cF$.
For every real number $x$, we set $g_j^{-1}(x)=\min g_j^{-1}(\{x\})$.
Since $g_j$ is ascending, $g_j^{-1}(x)\ge g_j^{-1}(y)$ whenever $x\ge y$.

Let $\Lambda'=\Lambda \cup \Gamma=\{\q_1,\ldots,\q_\ell,
\para_{r'+1},\ldots,\para_r\}$, and recall that
$f\colon\Natural^\ell\to\Reals^+$ is the (non-decreasing)
function bounding the running time of $\cA^{\Gamma}$.
We define a new function $f':\Natural^{\ell+t}\to\Reals$ by setting
\[
f'(x_1,\ldots,x_\ell,y_{1},\ldots,y_t)= f(z_1,\ldots,z_\ell),
\]
where for each $i\in\icc{1}{\ell}$,
\[
z_i=\max\left(\{x_i\}\cup\sst{g_k(y_k)}{k\in h^{-1}(\{i\})}\right).
\]
Let $s_f$ be a sequence-number function for $f$ and let
$(S_f(i))_{i\in\Natural}$ be a corresponding bounded set-sequence with bounding constant $c$.

We assert that $s_f$ is also a sequence-number function of $f'$ and admits a
corresponding bounded set-sequence with bounding constant $c$.
To see this, we first define
for $i\in\Natural$ a set $S_{f'}(i)$ with
$\abs{S_{f'}(i)}=\abs{S_f(i)}$ as follows.
For each $(x_1,\ldots,x_\ell)\in S_f(i)$, let
$S_{f'}(i)$ contain $(x_1,\ldots,x_\ell,
y_1,\ldots,y_t)$, where $y_j= g^{-1}_j(x_{h(j)})$ for
$j\in\icc{1}{t}$.
Observe that $g_j(y_j)=x_{h(j)}$ for every $j\in\icc{1}{t}$. Hence,
$f'(x_1,\ldots,x_\ell,y_1,\ldots,y_t)=f(x_1,\ldots,x_\ell)$ if
$(x_1,\ldots,x_\ell,y_1,\ldots,y_t)\in S_{f'}(i)$.

This observation directly implies that $f'(\underline{x}')\le c\cdot i$ if
$\underline{x}'\in S_{f'}(i)$, since
$f(\underline{x})\le c\cdot i$ if $\underline{x}\in S_f(i)$.
Now, assume that $f'(\underline{x})\le i$ for some
$\underline{x}=(x_1,\ldots,x_\ell,y_1,\ldots,y_\ell)\in\Natural^{\ell+t}$. Then,
$f(z_1,\ldots,z_\ell)\le i$, where $z_i$ is given by the definition of $f'$.
Consequently,
there exists a vector $\underline{\tilde{z}}\in S_f(i)$ that dominates
$(z_1,\ldots,z_\ell)$. Moreover,
\[
\underline{\tilde{z}}'=(\underline{\tilde{z}}_1,\ldots,\underline{\tilde{z}}_\ell,g^{-1}_1(\underline{\tilde{z}}_{h(1)}),\ldots,g^{-1}_t(\underline{\tilde{z}}_{h(t)}))\in
S_{f'}(i).
\]
Therefore, if $j\in\icc{1}{\ell}$ then
$(\underline{z'})_j=\underline{\tilde{z}}_j\ge z_j\ge x_j$, and if $j\in\icc{1}{t}$ then
$g_j((\underline{z}')_{\ell+j})=\underline{\tilde{z}}_{h(j)}\ge
z_{h(j)}\geq g_j({y}_{j})$, so $(\underline{z'})_{\ell+j}\ge y_j$,
as $g_j$ is ascending.
This finishes the proof of the assertion.

Since $\Gamma\subseteq \Lambda'$, we know that there exists a uniform local deterministic
(respectively, randomized Las Vegas) algorithm
$\cA$ for $\Prob$ and $\cF$
such that the (respectively, expected) running time of $\cA$ over any configuration
$(G,\inp)\in \cF$ is $\bigo{f'^*\cdot s_{f'}(f'^*)}=\bigo{f'^*\cdot s_{f}(  f'^*)}$, where
$f'^*=
f(\q^*_1,\ldots,\q^*_\ell,\para^*_{r'+1},\ldots,\para^*_r\})$.
The fact that $f'$ is non-decreasing implies that
\[
f'^*\leq f'(\q^*_1,\ldots,\q^*_\ell,g_1^{-1}(\q^*_{h(1)}),\ldots,g_t^{-1}(\q^*_{h(t)}))
=f^*.
\]
As $s_f$ is non-decreasing, the (respectively, expected) running time of $\cA$ is bounded by
$\bigo{f^*\cdot s_{f}(f^*)}$, as desired.
\qed
\end{proof}

Applying Theorem~\ref{thm:parameters} to the
work of Barenboim and Elkin~\cite{BaEl10b} (see Theorem 6.3 therein) with $\Gamma=\{a,n\}$ and
$\Lambda=\{n\}$ yields the following result, since $a\le n$.

\begin{corollary}\label{cor:arboricity}
The following  uniform deterministic algorithms solving $\MIS$ exist :
\begin{itemize}
\item
For the family of  graphs with arboricity $a=\lo{\sqrt{\log n}}$, running in time $\lo{\log n}$,  \\
\item
For any constant $\delta\in\ioo{0}{1/2}$, for  the family of
graphs with arboricity
$a=\bigo{\log^{1/2-\delta} n}$, running in time $\bigo{\log n/\log\log n}$.\\
\end{itemize}
\end{corollary}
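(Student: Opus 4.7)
The plan is to apply Theorem~\ref{thm:parameters} once for each of the two families, using the MIS algorithm of Barenboim and Elkin~\cite{BaEl10b} (Theorem~6.3 therein) as the black-box non-uniform algorithm $\cA^{\Gamma}$. That algorithm requires the inputs of all nodes to contain upper bounds on both $n$ and $a$, while its running time depends only on $n$ once the arboricity regime is fixed. So the natural choice is $\Gamma = \{a, n\}$ and $\Lambda = \{n\}$, and the whole argument reduces to checking the hypotheses of Theorem~\ref{thm:parameters} and reading off the conclusion.

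First I would verify that $\Gamma$ is weakly-dominated by $\Lambda$. Since $\Gamma \setminus \Lambda = \{a\}$, we only need one ascending function $g$ such that $g(a(G,\inp)) \le n(G,\inp)$ for every instance $(G,\inp)$. The identity function works, as the arboricity of any graph is at most its number of nodes.

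Next I would check the remaining ingredients. The pruning algorithm is $\cP_{\MIS} = \cP_{(2,1)}$ from Observation~\ref{claim:universal-ruling}; it runs in constant time and, ignoring inputs, is monotone with respect to any non-decreasing parameter, and hence in particular with respect to $\Lambda \cup \Gamma = \{a,n\}$. Since the running-time bound $f$ in each case is a function of $n$ alone (so $\ell = \card{\Lambda} = 1$), $f$ is trivially additive, and Observation~\ref{obs:sequence} supplies the constant function $s_f \equiv 1$ as a sequence-number function for $f$.

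With these hypotheses in place, Theorem~\ref{thm:parameters} produces a uniform deterministic MIS algorithm whose running time on every instance $(G,\inp)$ is $O(f^* \cdot s_f(f^*)) = O(f^*)$, where $f^* = f(n(G,\inp))$. Specializing to the two running-time bounds of Theorem~6.3 of~\cite{BaEl10b} gives the claimed $\lo{\log n}$ bound for the family of graphs of arboricity $\lo{\sqrt{\log n}}$ and the $\bigo{\log n / \log \log n}$ bound for the family of graphs of arboricity $\bigo{\log^{1/2-\delta} n}$. The only mildly delicate point is that each family must be closed under inclusion so that the output of the pruning step stays in the family; this is the standing convention of Section~\ref{sec:preliminaries} and is harmless because arboricity is monotone under taking induced subgraphs, so I do not expect any real obstacle beyond the bookkeeping of matching $\Gamma$ with $\Lambda$.
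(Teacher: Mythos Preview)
Your proposal is correct and follows exactly the route the paper takes: apply Theorem~\ref{thm:parameters} to the Barenboim--Elkin algorithm with $\Gamma=\{a,n\}$ and $\Lambda=\{n\}$, using $a\le n$ for weak domination. The paper's own justification is a single sentence; you have simply spelled out the verifications (the pruning algorithm $\cP_{\MIS}$, its monotonicity, and the constant sequence-number function for a one-variable $f$) that the paper leaves implicit.
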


\subsection{Running as Fast as the Fastest Algorithm}\label{sec:min}

To illustrate the topic of the next theorem, consider the  non-uniform  algorithms for MIS for general graphs, namely, the
algorithms of Barenboim and Elkin~\cite{BaEl09} and that of Kuhn~\cite{Kuh09}, which run in time
$\bigo{\Delta+\log^* n}$ and use the knowledge of $n$ and $\Delta$, and
the algorithm of Panconesi and Srinivasan~\cite{PaSr96}, which runs in time
$2^{\bigo{\sqrt{\log n}}}$ and requires the knowledge of $n$. Furthermore, consider the MIS algorithms of
Barenboim and Elkin in ~\cite{BaEl10a,BaEl10b}, which
are very efficient for graphs with a small arboricity $a$.
If $n$, $\Delta$ and $a$ are contained in the inputs of all nodes,
then one can compare the running times of these
algorithms and use the fastest one. That is, there exists a non-uniform
algorithm $\cA^ {\{n,\Delta,a\}}$ that runs in time
$T(n,\Delta,a)=\min\{g(n),h(\Delta,n), f(a,n)\}$, where  $g(n)=2^{\bigo{\sqrt{\log n}}}$, $h(\Delta,n)=\bigo{\Delta+\log^* n}$, and $f(a,n)$ is defined as follows: $f(a,n)=\lo{\log n}$ for graphs of arboricity $a=\lo{\sqrt{\log n}}$,
$f(a,n)=\bigo{\log n/\log\log n}$ for arboricity
$a=\bigo{\log^{1/2-\delta} n}$, for some constant $\delta\in\ioo{0}{1/2}$; and
otherwise: $f(a,n)=\bigo{a+a^{\epsilon}\log n}$, for arbitrary small constant $\epsilon>0$.

Unfortunately, the theorems established so far do not allow us to transform
$\cA^ {\{n,\Delta, a\}}$ into a uniform algorithm---the reason being that the
function $T(n,\Delta,a)$ bounding the running time does not have a sequence number.
On the other hand, as mentioned in Corollary~\ref{cor:MIS},
Theorem~\ref{simple} does allow us to transform each of the algorithms in \cite{BaEl09,Kuh09,PaSr96}
into a uniform MIS algorithm, with time complexity
$\bigo{\Delta+\log^* n}$ and $2^{\bigo{\sqrt{\log n}}}$, respectively.
Moreover, Corollaries  \ref{cor:MIS-arb} and~\ref{cor:arboricity} show that Theorems~\ref{simple} and~\ref{thm:parameters} allow us to transform the algorithms  in ~\cite{BaEl10a,BaEl10b} to uniform algorithms
that (over the appropriate  graph families), run as fast as the corresponding non-uniform algorithms .
Nevertheless,
unless $n$, $\Delta$ and $a$ are provided as inputs to the nodes, it is not clear how to
obtain from these transformed algorithms a uniform algorithm running~in~time
$T(n,\Delta,a)$. The following theorem solves this problem.

\begin{theorem}
\label{thm:min}
Consider a problem $\Prob$ and a family of instances $\cF$. Let $k$ be a
positive integer and
let $\Lambda_1,\ldots, \Lambda_k$ be $k$ sets of non-decreasing parameters.
Let $\cP$ be
a $(\Lambda_1\cup \cdots\cup \Lambda_k)$-monotone  pruning algorithm for $\Prob$ and $\cF$.
For $i\in\{1,2,\cdots,k\}$, consider a uniform algorithm $\cU_i$ whose running time
is bounded with respect to $\Lambda_i$ by a function $f_i$.
Then there is a uniform algorithm with running time $\bigo{f_{\min}}$,
where $f_{\min}={\min\{f_1(\Lambda_1^*), \ldots, f_k(\Lambda_k^*)\}}$.
\end{theorem}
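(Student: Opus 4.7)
I will build a uniform alternating algorithm $\tau$ that runs in phases $i=1,2,\ldots$. In phase $i$, the algorithm sequentially executes $\cU_1^{(i)};\cP;\cU_2^{(i)};\cP;\cdots;\cU_k^{(i)};\cP$, where $\cU_j^{(i)}$ denotes $\cU_j$ restricted to $2^i$ rounds (in the sense defined in Section~\ref{sec:preliminaries}), and each call operates on the subgraph induced by the nodes still active after the previous prunings. Since $k$ is a fixed constant and $\cP$ runs in constant time, phase $i$ uses $\bigo{k\cdot 2^i}$ rounds. Because the $\cU_j$ are themselves uniform, no guesses need to be fed in, which distinguishes this construction from the doubling scheme of Theorem~\ref{simple}.

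Correctness will follow from Observation~\ref{sequence}, provided all nodes are eventually pruned. Let $j^*\in\icc{1}{k}$ achieve $f_{\min}=f_{j^*}(\Lambda_{j^*}^*(G,\inp))$ and set $s=\lceil\log f_{\min}\rceil$. The key invariant is that, after any number of invocations of $\cP$, every parameter in $\Lambda_{j^*}$ evaluated on the residual instance is bounded above by its initial value in $\Lambda_{j^*}^*$; this is a direct induction using the $\Lambda_{j^*}$-monotonicity of $\cP$, which holds since $\Lambda_{j^*}\subseteq\Lambda_1\cup\cdots\cup\Lambda_k$. Combined with $f_{j^*}$ being non-decreasing, this invariant implies that the time $\cU_{j^*}$ actually needs on the residual instance is at most $f_{j^*}(\Lambda_{j^*}^*)=f_{\min}\le 2^s$. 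Hence during phase $s$ the call to $\cU_{j^*}^{(s)}$ completes (it is not truncated by the $2^s$-round budget) and its output is a correct solution for the residual instance; the solution-detection property of the immediately following $\cP$ then prunes all remaining nodes.

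For the running time, a geometric sum gives
\[
\sum_{i=1}^{s}\bigo{k\cdot 2^i}=\bigo{k\cdot 2^s}=\bigo{k\cdot f_{\min}}=\bigo{f_{\min}},
\]
where the last equality uses that $k$ is a fixed constant. Plugging this into the construction shows that $\tau$ is uniform and satisfies the claimed bound.

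The main subtlety I expect is the bookkeeping behind the monotonicity invariant when the pruning algorithm is interleaved with calls to algorithms $\cU_j$ whose running-time analyses depend on \emph{different} parameter sets $\Lambda_j$. This is precisely why the hypothesis asks for $\cP$ to be monotone with respect to the \emph{union} $\Lambda_1\cup\cdots\cup\Lambda_k$: only under this combined monotonicity can one guarantee that prunings triggered by an output of, say, $\cU_{j'}$ do not inflate a parameter relevant to the analysis of $\cU_{j^*}$. Once that invariant is in place, the rest of the argument is a direct adaptation of the deterministic doubling-budget analysis used in the proof of Theorem~\ref{simple}.
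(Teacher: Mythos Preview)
Your proposal is correct and follows essentially the same approach as the paper: a doubling-budget alternating algorithm that interleaves the $\cU_j$ (each truncated to $2^i$ rounds in phase $i$) with $\cP$, combined with the monotonicity invariant and Observation~\ref{sequence}. The only cosmetic difference is that the paper first reduces to $k=2$ and then unfolds the same construction, whereas you handle general $k$ directly.
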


\begin{proof}
Clearly, it is sufficient to prove the theorem for the case $k=2$.
The basic idea behind the proof of theorem above is to run in iterations,
such that each iteration $i$ consists of running the quadruple $(\cU_1;\cP;\cU_2;\cP)$,
where $\cU_1$ and $\cU_2$ are executed for precisely $2^i$ rounds each.
Hence, a correct solution will be produced in Iteration
$s=\lceil\log f_{\min}\rceil$ or
before. Since each iteration $i$ takes at most $\bigo{2^i}$ rounds (recall
that the running time of $\cP$ is constant), the running time is $\bigo{f_{\min}}$.

Formally, we define a sequence of uniform algorithms $(\cA_i)_{i\in\mathbf{N}}$
as follows. For $i\in\mathbf{N}$, set $\cA_{2i+1}=\hat{\cU_1}$ and
$\cA_{2i+2}=\hat{\cU_2}$, where $\hat{\cU_j}$ is
$\cU_j$ restricted to $2^{i}$ rounds for $j\in\{1,2\}$.
Let $\pi$ be the uniform alternating algorithm with respect to
$(\cA_i)_{i\in\mathbf{N}}$ and $\cP$,
that is $\pi=\cB_{1};\cB_{2};\cB_{3};\cdots$ where
$\cB_{2i+j}=\hat{\cU_j};\cP$ for every $i\in\mathbf{N}$ and every
$j\in\{1,2\}$.
Letting $T_0$ be the running time of $\cP$,
the running time of $\cB_i$ is at most $2^{i}+T_0$, for every $i\in\mathbf{N}$.

Consider an instance $(G,\inp)\in\cF$. For each $(\para,\q)\in
\Lambda_1\times\Lambda_2$,
let $\para^*=\para(G,\inp)$ and $\q^*=\q(G,\inp)$.
Algorithm $\cB_i$ operates on the configuration $(G_i,\inp_i)$.
Let $\para\in \Lambda_1\cup \Lambda_2$. Because $\cP$  is monotone with
respect to $\Lambda_1\cup \Lambda_2$, it follows by induction on $i$
that $\para^*\geq \para(G_i,\inp_i)$.
Hence, the running time of $\cU_j$
over $(G_i,\inp_i)$ is bounded from above by
$f_j(\Lambda_j^*)$ for every $i\in\mathbf{N}$ and each $j\in\{1,2\}$.
Thus, $V(G_{2s+2})=\emptyset$ for the smallest $s$ such that
$2^s\geq f_{\min}$. In other words, $\pi=\cB_{1};\cB_{2};\cdots;\cB_{2s+1}$.
Consequently, by Observation~\ref{sequence}, Algorithm $\pi$ correctly solves $\Prob$ on $\cF$
and, since $\cB_i$ runs in at most $2^{\lceil i/2\rceil }+T_0$ rounds,
the running time of $\pi$ is $\bigo{2^s}=\bigo{f_{\min}}$,
as asserted.
\qed
\end{proof}

Now, we can combine  Theorem~\ref{thm:min}
with  Corollaries~\ref{cor:MIS-arb} and~\ref{cor:arboricity}, and establish a uniform algorithm for MIS that runs in time $f(a,n)$.
Combining this algorithm with Corollary~\ref{cor:MIS}, and applying once more Theorem~\ref{thm:min}
yields Corollary~\ref{cor:main}\ref{thm:mis}.

\section{Uniform Coloring Algorithms}\label{sec:coloring}

In general, we could not find a way to directly apply our transformers (e.g., the one given by Theorem~\ref{thm:parameters}) for the coloring problem.  The main reason is that we could
not find an efficient pruning algorithm for the coloring problem. Indeed, consider for example the $O(\Delta)$-coloring problem. The checking property of a pruning algorithm
requires that, in particular, the nodes can locally decide whether they belong to a legal configuration.
While locally checking that neighboring nodes have distinct colors is easy,
knowing whether a color is in the required range, namely,
$\icc{1}{\bigo{\Delta}}$, seems difficult as the nodes do not know
$\Delta$. Moreover, the gluing property seems difficult to tackle also:
after pruning a node with color $c$,  none of its unpruned neighbors can be colored in color $c$.
In other words, a correct solution on the non-pruned subgraph may not glue well with the pruned subgraph.

Nevertheless, we show in this section that several relatively general transformers can be used to obtain uniform coloring algorithms from non-uniform one.
We focus on standard coloring problems in which the required number of colors is given as a function of $\Delta$.

\subsection{Uniform $(\Delta+1)$-coloring Algorithms}
A standard trick (cf., \cite{Lin92,Lub86}) allows us to
transform an efficient (with respect to $n$ and $\Delta$) MIS algorithm for general graphs into one for
$(\Delta+1)$-coloring (and, actually, to the more general \emph{maximal
coloring} problem defined by Luby~\cite{Lub86}).
The general idea is based on the observation that $(\Delta+
1)$-colorings of $G$ and maximal independent sets of $G'= G \times K_{\Delta+1}$ are in
one-to-one correspondence. More precisely, and avoiding the use of $\Delta$,
the graph $G'$ is constructed from $G$ as follows.
For each node
$u\in V(G)$, take a clique $C_u$ of size $\deg_G(u)+1$ with nodes
$u_1,\ldots,u_{\deg_G(u)+1}$.  Now, for each $(u,v)\in E(G)$ and each
$i\in\icc{1}{1+\min\{\deg_G(u),\deg_G(v)\}}$, let $(u_i,v_i)\in E(G')$.
The graph $G'$ can be constructed by a local algorithm without using any
global parameter. It remains to observe the existence of a natural one-to-one
correspondence between the maximal independent sets of $G'$ and the
$(\deg_G+1)$-colorings of $G$, that is, the colorings of $G$ such that each node
$u$ is assigned a color in $\icc{1}{\deg_G(u)+1}$.

To see this, first
consider a $(\deg_G+1)$-coloring $c$ of $G$. Set
\[
X=\sst{u_i\in V(G')}{c(u)=i}.
\]
Then, no two nodes in $X$ are adjacent in $G'$. Moreover, a node that does
not belong to $X$ has a neighbor in $X$ since $X$ contains
a vertex from each clique $C_u$ for $u\in V(G)$. Therefore, $X$ is a MIS of
$G'$.

Conversely, let $X$ be a MIS of $G'$. We assert that $X$ contains a node
from every clique $C_u$ for $u\in V(G)$. Indeed, suppose on the contrary
that $X\cap V(C_u)=\emptyset$ for a node $u\in V(G)$. By the definition
of a MIS, every vertex $u_i\in V(C_u)$ has a neighbor $v(u_i)$ that belongs to
$X$. Since a clique can contain at most one node in $X$ and $v(u_i)\neq
v(u_j)$ whenever $i\neq j$, we deduce that at least $\abs{C_u}$ cliques $C_v$
with $v\neq u$ contain a node that has a neighbor in $C_u$. This
contradicts the definition of $G'$, since $\abs{C_u}=\deg_G(u)+1$.
Thus, setting $c(u)$ to be the index $i\in\{1,\ldots,\deg_G(u)+1\}$
such that $u_i\in X$ yields a $(\deg_G+1)$-coloring of $G$.

Therefore, we
obtain Corollary~\ref{cor:main}\ref{thm:color-delta} as a direct consequence of
Corollary~\ref{cor:main}\ref{thm:mis}.

\subsection{Uniform Coloring with More than $\Delta+1$ Colors}

We now aim to provide a transformer taking as input an efficient non-uniform coloring algorithm that uses $g(\Delta)$ colors (where  $g(\Delta)>\Delta$) and produces
an efficient uniform coloring algorithm that uses $\bigo{g(\Delta)}$ colors.
We begin with the following definitions.

An \emph{instance} for the coloring problem is a pair $(G,\inp)$ where $G$
is a graph and $\inp(v)$ contains a color $c(v)$ such that the collection
$\sst{c(v)}{v\in V(G)}$ forms a coloring of $G$. (The color $c(v)$ can be the
identity $\id(v)$.) For a given family $\cal{G}$ of graphs, we define
$\cF(\cal{G})$ to be the collection of instances $(G,\inp)$ for the
coloring problem, where $G\in\cal{G}$.

Many coloring algorithms consider the identities as
colors, and relax the assumption that the identities are unique by replacing it
with the weaker requirement that the set of initial colors forms a coloring.
Given an instance~$(G,\inp)$, let $m=m(G,\inp)$ be the maximal identity.
Note that $m$ is a graph-parameter.

Recall the $\lambda(\tilde{\Delta}+1)$-coloring algorithms
designed by Barenboim and Elkin~\cite{BaEl09} and Kuhn~\cite{Kuh09}
(which generalize the $\bigo{\tilde{\Delta}^2}$-coloring algorithm of
Linial~\cite{Lin92}).
We would like to point out that, in fact, everything works similarly in these
algorithms if one replaces  $n$ with $m$.
That is, these $\lambda(\tilde{\Delta}+1)$-coloring algorithms
can be viewed as requiring $m$ and  $\Delta$ and running
in time $\bigo{{\tilde{\Delta}/\lambda+\log^* \tilde{m}}}$.
The same is true for the edge-coloring algorithms of Barenboim and Elkin~\cite{BaEl11}.

The following theorem implies that these algorithms can be transformed into uniform ones.
In the theorem, we consider two sets $\Gamma$ and $\Lambda$ of non-decreasing graph-parameters such that
\begin{itemize}
\item[(1)] $\Gamma$ is weakly-dominated by $\Lambda$; and
\item[(2)] $\Gamma\subseteq\{\Delta, m\}$.
\end{itemize}
Two such sets of parameters are said to be \emph{related}.
The notion of moderately-fast function (defined in
Section~\ref{sec:preliminaries}) will be used to
govern the number of colors used by the coloring algorithms.

\begin{theorem}\label{thm:tight}
Let $\Gamma$ and $\Lambda$ be two related sets of non-decreasing graph-parameters and
let $\cA^{\Gamma}$ be a $g(\tilde{\Delta})$-coloring algorithm with running time bounded
with respect to $\Lambda$ by some function $f$. If
  \begin{enumerate}
  \item there exists a sequence-number function  $s_f$ for $f$;
  \item $g$ is moderately-fast;
  \item the dependence of $f$ on $m$ is bounded by a
    polylog; and
  \item the dependence of $f$ on $\Delta$ is
    moderately-slow;
  \end{enumerate}
then there exists a uniform $\bigo{g(\Delta)}$-coloring algorithm
running in time  $\bigo{f(\Lambda^*)\cdot s_{f}(  f(\Lambda^*))}$.
\end{theorem}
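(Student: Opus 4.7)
The proof adapts the machinery of Theorems~\ref{simple} and~\ref{thm:parameters}, replacing the missing pruning algorithm for coloring by an \emph{in-place validity check}: after each attempted run of $\cA^\Gamma$, each still-active node decides locally whether to commit its tentative color, and disjoint palettes are reserved across outer iterations so that committed colors from different iterations never collide. First, following the opening of the proof of Theorem~\ref{thm:parameters}, I reduce to the case $\Gamma\subseteq\Lambda$ (augmenting $\Lambda$ and $f$ if necessary, while preserving the sequence-number function).

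The uniform algorithm proceeds in outer iterations $i=1,2,\ldots$. Outer iteration $i$ fixes the $\Delta$-guess to $\tilde{\Delta}_i=2^i$ and reserves a fresh palette $P_i=\icc{O_i+1}{O_i+g(2^i)}$, where $O_i=\sum_{i'<i}g(2^{i'})$. Inside outer iteration $i$, I loop over those guess vectors in $S_f(2^i)$ whose $\Delta$-component equals $2^i$ (varying guesses for $m$ and any other parameter in $\Lambda$). For each such guess I run $\cA^\Gamma$ on the current active subgraph for $c\cdot 2^i$ rounds and then, in $O(1)$ rounds, every still-active node $u$ commits its tentative output $c(u)$, shifted to $c(u)+O_i$, provided (a) $c(u)\in\icc{1}{g(2^i)}$ and (b) $c(u)+O_i$ differs from the colors of all its neighbors (whether committed earlier or tentatively held in this sub-iteration). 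Committed nodes leave the active set; the rest move on to the next sub-iteration.

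Correctness and the palette bound follow as in Theorem~\ref{thm:parameters}. Let $s^*$ be the smallest integer with $2^{s^*}\ge\Delta$ such that some sub-iteration of outer iteration $s^*$ uses a guess vector dominating the true parameters in $\Lambda$; such a sub-iteration exists by the set-sequence property (the polylog dependence of $f$ on $m$ keeps the restricted set-sequence non-empty and the moderately-slow $\Delta$-dependence controls the budget). On it, $\cA^\Gamma$ completes within its $c\cdot 2^{s^*}$-round budget and produces a valid $g(2^{s^*})$-coloring of the still-active subgraph, so every remaining node passes its local check and commits. Palette disjointness across outer iterations together with the local check guarantees that the resulting global coloring is proper. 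The total color count is $\sum_{i\le s^*}g(2^i)=O(g(2^{s^*}))=O(g(\Delta))$: the geometric-sum bound uses the moderately-increasing (hence moderately-fast) property of $g$, and the final inequality uses $2^{s^*}\le 2\Delta$ together with the moderately-slow property of $g$. For the running time, outer iteration $i$ has at most $s_f(2^i)$ sub-iterations each of cost $c\cdot 2^i+O(1)$, so iteration $i$ costs $O(s_f(2^i)\cdot 2^i)$; the assumptions on $f$ give $2^{s^*}=O(f^*)$, and summing the geometric series with $s_f$ moderately-slow yields $O(f^*\cdot s_f(f^*))$.

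The main obstacle is marrying the \emph{outer doubling on $\Delta$} (needed to keep the total palette at $O(g(\Delta))$ via the moderately-fast growth of $g$) with the \emph{inner set-sequence over the remaining parameters} (needed to keep the running time at $O(f^*\cdot s_f(f^*))$); the polylog bound on the $m$-dependence of $f$ and the moderately-slow $\Delta$-dependence are exactly what make these two mechanisms mutually compatible, and justifying their interaction carefully is the one place where the proof goes beyond a direct invocation of Theorem~\ref{thm:parameters}.
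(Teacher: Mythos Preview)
Your proposal has a genuine gap in the termination argument, and it is precisely the obstacle the paper flags at the start of Section~\ref{sec:coloring}. Consider a node $v$ that remains active into sub-iteration $(j^*,s^*)$, where the guess vector is good and $\cA^\Gamma$ produces a proper $g(2^{s^*})$-coloring of the current active subgraph $G'$. Your check~(b) requires that $c(v)+O_{s^*}$ differ from the colors of \emph{all} neighbors of $v$, including those that committed in earlier sub-iterations of the \emph{same} outer iteration $s^*$. But $\cA^\Gamma$ is run on $G'$ only; it has no knowledge of the color $c(u)+O_{s^*}$ already fixed at a committed neighbor $u\notin V(G')$. Nothing prevents $c(v)=c(u)$, in which case $v$ fails check~(b) and does not commit. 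Hence the ``good'' sub-iteration need not empty the active set, and your correctness/termination argument collapses. This is exactly the failure of the gluing property for coloring; the paper circumvents it by passing to the strong list-coloring (SLC) problem, whose built-in redundancy ($\deg_G(v)+1$ copies of each color) supports a genuine pruning algorithm that \emph{removes} committed neighbors' colors from the remaining lists while keeping the instance valid.

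There is a second, independent problem with your running-time claim. You assert that ``the assumptions on $f$ give $2^{s^*}=O(f^*)$'', but none of conditions 1--4 imply $\Delta=O(f^*)$. Take $f(\Delta,m)=\Theta(\log^* m)$ (Linial's $O(\Delta^2)$-coloring): all four hypotheses hold, yet $2^{s^*}\approx\Delta$ can be arbitrarily larger than $f^*$. Your outer doubling is on $\tilde\Delta$, and you budget $c\cdot 2^i$ rounds per sub-iteration, so the total time is at least $\Theta(\Delta)$ in this example, not $O(f^*\cdot s_f(f^*))$. The paper avoids this by layering the graph by \emph{degree} (not by iteration), running the uniform SLC algorithm on all layers in parallel, and then invoking $\cA^\Gamma$ once per layer with known good guesses; thus the time spent never scales with a $\Delta$-guess that overshoots $f^*$.
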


\begin{proof}
Our first goal is to obtain a coloring algorithm that does not require $m$ (and thus requires only
$\Delta$). For this purpose we first define the following problem.

\noindent{The \emph{strong list-coloring} (SLC) problem:}
a configuration for the SLC problem is a pair $(G,\inp)\in \cF(\cal{G})$
such that
\begin{enumerate}[label=\textup{(\arabic{*})}, ref={(\arabic{*})}]
  \item\label{SLC:1} there exists an integer $\hat{\Delta}$ in $\cap_{v\in
    V(G)}\inp(v)$ such that $\hat{\Delta}\ge\Delta$; and
  \item\label{SLC:2} the input $\inp(v)$ of every vertex $v\in V(G)$ contains
a list $L(v)$ of colors contained in $\icc{1}{g(\hat{\Delta})} \times
\icc{1}{\hat{\Delta}+1}$ such that
\[
\forall k\in\icc{1}{g(\hat{\Delta})},\quad\abs{\sst{j}{(k,j)\in L(v)}}\geq \deg_G(v)+1.
\]
\end{enumerate}
Given a configuration $(G,\inp)\in \cF(\cal{G})$, an output vector
$\out$ is a \emph{solution to SLC} if it forms a coloring  and if $\out(v)\in L(v)$
for every node $v\in V(G)$. Condition~\ref{SLC:1} above implies that a local
algorithm for SLC can use an upper bound on $\Delta$, which is the same for
all nodes.
Informally, Condition~\ref{SLC:2} above implies that the list $L(v)$ of colors
available for each node $v$ contains $\deg_G(v)+1$ copies of each color in
the range $\icc{1}{g(\hat{\Delta})}$.

We now design a pruning algorithm $\cP$ for SLC.
Consider a triplet
$(G,\inp,\hout)$, where $(G,\inp)$ is a configuration for SLC and $\hout$ is
some tentative assignment of colors. The set $W$ of nodes to be pruned is
composed of all nodes $u$ satisfying
$\hout(u) \in L(u)$ and $\hout(u) \neq
\hout(v)$ for all $v\in N_G(u)$. For each node $u\in V\setminus W$,
set
\[
L'(u) = L(u)\setminus\sst{\hout(v)}{v\in N_G(u)\cap W}.
\]
In other words, $L'(u)$ contains all the colors in $L(u)$ that are not
assigned to a neighbor of $u$ belonging to $W$.
Algorithm~$\cP$~returns the configuration $(G',\inp')$, where $G'$ is the
subgraph of $G$ obtained by removing the nodes in $W$ and
\[
\inp'(u) = (\inp(u)\setminus L(u))\cup L'(u),
\quad\text{for $u\in V\setminus W$.}
\]

Observe  that if we
start with a configuration $(G,\inp)$ for SLC, then the output $(G',\inp')$
of the pruning algorithm $\cP$ is also a configuration for SLC. Indeed,
for every node $v$ and every integer $k$,
at most $\deg_W(v)$ pairs $(k,j)$ are removed from the
list $L(v)$ of $v$, where $\deg_W(v)$ is the number of neighbors of $v$ that
belong to $W$. On the other hand, the degree of $v$ in $G'$ is reduced by
$\deg_W(v)$. Note also that the input vector of all nodes still contain
$\hat{\Delta}$, which is an upper bound for the maximum degree of $G'$.

Starting with $\cA^\Gamma$, it is straightforward to design a local
algorithm $\cB^{\Gamma'}$ for $SLC$ that depends on $\Gamma'=\Gamma\setminus\{\Delta\}$.
Specifically, $B^{\Gamma'}$ executes $A^{\Gamma}$ using
the good guess $\tilde{\Delta}=\hat{\Delta}$ for the parameter $\Delta$.
Furthermore, if $A^{\Gamma}$ outputs at $v$ a color $c$, then
$B^{\Gamma'}$ outputs the color $(c,j)$ where $j=\min\sst{s}{(c,s)\in L(v)}$.

Given an instance for SLC, we view $\hat{\Delta}$ as a
non-decreasing parameter, and
convert $\Lambda$ to a new set of non-decreasing parameters $\Lambda'$ by replacing $\Delta$
with $\Delta'$. Formally,
if $\Delta\in \Lambda$ then set
$\Lambda'=(\Lambda\setminus \Delta) \cup\hat{\Delta}$, and otherwise, set
$\Lambda'=\Lambda$.
Since $\Gamma$ and $\Lambda$ contain only non-decreasing graph-parameters---and since
$\hat{\Delta}$ is contained in all the inputs---we deduce that the pruning algorithm $\cP$
is $(\Gamma'\cup \Lambda')$-monotone.

Now, we apply Theorem~\ref{thm:parameters} to Algorithm $\cB^{\Gamma'}$,
the sets $\Gamma'$ and $\Lambda'$ of non-decreasing parameters
and the aforementioned pruning algorithm $\cP$ for SLC.
We obtain a uniform algorithm
$\cB$ for SLC and $\cF(\cal{G})$, whose running time is
$\bigo{f(\Lambda'^*)\cdot s_{f}(  f(\Lambda'^*))}$.

We are ready to specify the desired uniform $\bigo{g(\Delta)}$-coloring algorithm.
We define inductively a sequence $(D_i)_{i\in\Natural}$ by setting
$D_1=1$ and
\[
D_{i+1}=\min\sst{\ell}{g(\ell)\geq2g(D_{i})}
\]
for $i\ge1$. As $g$ is moderately-increasing, there is a constant
$\alpha$ such that for each integer $i\ge1$,
\begin{enumerate}
    \item\label{p1} $D_{i+1}\ge\alpha D_i$ and
    \item\label{p2} $g(D_{i+1})\le\alpha^{\log\alpha}g(D_i)$.
\end{enumerate}

Given an initial configuration $(G,\inp)$, we partition
it according to the node degrees. For $i\in\Natural$, let $G_i$ be the
subgraph of $G$ induced by the set of nodes $v$ of $G$ with
$\deg_G(v)\in\icc{D_i}{D_{i+1}-1}$. Let $\inp_i$ be the input $\inp$
restricted to the nodes in $G_i$. The configuration
$(G_i,\inp_i)$, which belongs to $\cF(\cal{G})$, is referred to as \emph{layer} $i$.
Note that nodes can figure out locally which layer they belong to.
Observe also that $D_{i+1}-1$ is an upper bound on node degrees in
layer $i$.

The algorithm proceeds in two phases. In the first phase,
each node in layer $i$ is assigned the list of colors $L''_i=\icc{1}{g(D_{i+1})} \times
\icc{1}{D_{i+1}+1}$, and the degree estimation  $\hat{\Delta}_i= D_{i+1}$. Each layer is now an instance of SLC and we
execute Algorithm $\cB$ in parallel on all layers.
If Algorithm $\cB$ assigns a color $(c,j)$ to a node $v$ in layer $i$ then we change this color to $(g(D_{i+1})+c,j)$.
Hence, for each $i$, layer $i$ is colored with colors taken from $L'_i=\icc{g(D_{i+1})+1}{2g(D_{i+1})} \times
\icc{1}{D_{i+1}+1}$.

Note that nodes in different layers have disjoint color lists, and hence we
obtain a coloring of the whole graph $G$. The number of colors in  $L'_i$ is
at most $2D_{i+1}g(D_{i+1})$.
Let $i_{\max}$ is the maximal integer $i$ such that layer $i$ is
non-empty. The total number of colors used in the first phase is at most
$2D_{i_{\max}+1}g(D_{i_{\max}+1})$, which is $\bigo{\Delta g(\Delta)}$ by Properties~\ref{p1}
and~\ref{p2} above.

Furthermore, the running time of the first
phase of the algorithm is dominated by the running time of the algorithm on
layer~$i_{\max}$.
That is, the running time is at most $\bigo{f(\Lambda'^*)\cdot s_{f}(
f(\Lambda'^*))}$, where $\Lambda'^*$ is the collection of correct parameters
in $\Lambda'$ for layer~$i_{\max}$. Since $D_{i_{\max}+1}=\bigo{\Delta}$ and
the dependence of $f$ on $\Delta$ is moderately-slow, we infer that
$f(\Lambda'^*)=\bigo{f(\Lambda^*)}$.
As $s_f$ is moderately-slow too (by the definition),
we deduce that the running time is $\bigo{f(\Lambda^*)\cdot s_{f}(f(\Lambda^*))}$.

The second phase consists of running a second algorithm  to change the set of possible colors of nodes in
layer $i$ from $L'_i$ to $L_i=\icc{g(D_{i+1})+1}{2
g(D_{i+1})}$.
Specifically, on layer $i$, we execute $\cA^{\Gamma}$ using the guess
$\tilde{\Delta}=D_{i+1}$ for the parameter $\Delta$ and the guess
$\tilde{m}=2D_{i+1}g(D_{i+1})$ for the parameter $m$ (recall that
$\Gamma\subseteq\{\Delta, m\}$).  This procedure colors each layer with colors taken from the range
$\icc{1}{g(D_{i+1})}$.
Let $v$
be in layer $i$ and let $c(v)$ be the color assigned to $v$ by $\cA^{\Gamma}$.
The final color of $v$ given by our desired algorithm $\cA$ is
$g(D_{i+1})+c(v)$. Thus, the colors assigned to
the nodes in layer $i$ belong to $\icc{g(D_{i+1})+1}{2g(D_{i+1})}$.
Therefore, nodes in different layers are assigned distinct colors.
The algorithm is executed on each layer independently, all in parallel.
Hence, we obtain a coloring of $G$.
Moreover, since $g$ is moderately-increasing, the total number of colors
used is $\bigo{ g(\Delta)}$.

Recall that
$D_{i+1}=\bigo{\Delta}$ and $g(D_{i+1})=\bigo{g(\Delta)}$ for all $i$ such
that $G_i$ is not empty.
Hence, we deduce that  the running time of the second phase of the algorithm is bounded from
above by the running time of $\cA^{\Gamma}$ on $(G,\inp)$ using
the guesses $\tilde{\Delta}=\bigo{\Delta}$ and $\tilde{m}=\bigo{\Delta g(\Delta)}$.
Moreover, the
fact that $g(x)$ is bounded by a polynomial in $x$ implies that  $\tilde{m}$
is at most polynomial in $\Delta$, and hence in $m$.

Now, as the dependence of $f$ on $\Delta$ is moderately-slow and
the dependence of $f$ on $m$ is polylogarithmic,
the running time of the second phase of $\cA$ is $\bigo{f(\Lambda)}$.
Combining~this with the running time of the first phase concludes the proof.
\qed
\end{proof}

By Observation~\ref{obs:sequence}, the constant function $s_f=1$ is a
sequence-number function for every additive function~$f$.
Hence, Corollary~\ref{cor:main}\ref{thm:uniformtradeoff} directly follows from
Theorem~\ref{thm:tight}. Regarding edge-coloring, observe that Barenboim
and  Elkin~\cite{BaEl11} obtain their algorithm for general
graphs by running a vertex-coloring algorithm $\cA$ on the line-graph of the given
graph. This algorithm $\cA$ uses $m$ and $\Delta$ in
and the number of colors and time complexity of the
resulting edge-coloring algorithm are that of $\cA$.
Using Theorem~\ref{thm:tight}, one can
transform the algorithm $\cA$ designed for the
family of line graphs into a uniform one, having asymptotically the same number
of colors and running time. Hence, Theorem \ref{cor:main}\ref{thm:edge-coloring} follows.

Let $f\colon \Natural^2\to \Reals$ be given by $f(x_1,x_2)= f_1(x_1)\cdot f_2(x_2)$, where $f_1$ and $f_2$ are
ascending functions.
By Observation~\ref{obs:sequence}, the function
$s_f(i)=\lceil\log i\rceil+1$ is a sequence-number function for
$f$. Therefore, Corollary~\ref{cor:main}\ref{thm:uniform-sublinear} now follows by applying
Theorem~\ref{thm:tight} to the coloring algorithms of Barenboim and Elkin~\cite{BaEl10a}.

\section{Conclusion and Further Research}
\subsection{Pruning Algorithms}
This paper focuses on removing assumptions concerning global knowledge in the context of local algorithms.
We provide transformers taking
a non-uniform local algorithm as a black box and producing a uniform algorithm running in asymptotically the same number of rounds. This is established via the notion of pruning algorithms.
We believe that this novel notion  is of independent interest and
 can be used for other purposes too, e.g., in the context of fault tolerance or dynamic settings.

We remind the reader that we restricted the running time of a pruning algorithm to
be constant.
This is because in all our applications we use constant time pruning algorithms. In fact,
our transformers extend to the case where the given \emph{uniform} pruning
algorithm $\cP$
\begin{itemize}
    \item has running time bounded with respect to a set $\cal{S}$ of non-decreasing
parameters by a (non-decreasing) function $h$; and
    \item is $\cal{S}$-monotone.
\end{itemize}
However, the transformer may incur an additive overhead in the running time of the obtained uniform
algorithms, as these repeatedly use $\cP$. Specifically,
the overhead will be $h(\cal{S}^*)$ times the number of iterations used by the
transformer, which is typically logarithmic in the running time of the non-uniform algorithm.
It would be interesting to have an example of a problem that admits a fast non-trivial uniform
pruning algorithm but does  not admit a  constant time one.

\subsection{Bounded Message Size}
This paper focuses on the $\cal{LOCAL}$ model,
which does not restrict the number of bits used in messages.
Ideally, messages should be short, i.e., using $\bigo{\log n}$ bits.
We found it difficult to obtain a general transformer that takes an arbitrary
non-uniform algorithm using short messages and produces a uniform one
having asymptotically the same running time and message size.
The reason is that techniques similar to those used in this paper, require
guesses that fit for both the
function bounding the running time and the function bounding the message size.
Nevertheless, maintaining the same message size may still be possible given particular
non-uniform algorithms that use messages whose content does not depend on the guessed upper bounds,
such as algorithms that encode in the messages only identifiers, colors, or degrees.

\subsection{Coloring}
Recall that one of the difficulties in obtaining a pruning algorithm for  coloring problems lies in
the fact that the gluing property may not hold, that is,
a pruned node $v$ with color $c$ may have a non-pruned neighbor  $u$
which is  also colored~$c$ in some
correct coloring of the non-pruned subgraph.
In the context of running in iterations, in which one invokes a pruning algorithm
and subsequently, an algorithm~$\cA$ on the non-pruned subgraph (similarly to
Theorem~\ref{thm:parameters}), the aforementioned undesired phenomenon could be prevented
 if the algorithm $\cA$ would avoid coloring node $u$ with color $c$.
With this respect,
we believe that it would be interesting to investigate connections between $g$-coloring problems and
\emph{strong} $g$-coloring problems, in which
each node $v$ is given as an input a list of (forbidden) colors $F(v)$. In a correct solution, each node $v$ must color itself  with a color not in $L(v)$  such that the final configuration is a coloring
using at most $g$ colors.

Finally, recall that our transformer for coloring applies to deterministic
algorithms only. It would be interesting to design a general transformer
that takes non-uniform randomized coloring algorithms (e.g., the ones by
Schneider and Wattenhofer~\cite{ScWa10a}) and transforms them into uniform ones
with asymptotically the same running time.

\begin{acknowledgements}
The authors thank Boaz Patt-Shamir  and the anonymous referees for their careful reading and thoughtful
suggestions. Their comments helped to considerably improve the presentation of the paper.
\end{acknowledgements}

\bibliographystyle{spmpsci}      
\bibliography{color-template-dist}   
\end{document}